\newtheorem{theorem}{Theorem}
\newtheorem{problem}{Problem}
\theoremstyle{definition}
\newtheorem{defn}{Definition}
\theoremstyle{remark}
\newtheorem{remark}{Remark}
\newtheorem{lemma}{Lemma}
\theoremstyle{plain}
 \newtheorem{assumption}{Assumption}
\newcommand{\tabincell}[2]{\begin{tabular}{@{}#1@{}}#2\end{tabular}}
\begin{document}

\title{Angle-Displacement Rigidity Theory with Application to  Distributed Network Localization}

\author{Xu Fang, Xiaolei Li, Lihua Xie,~\IEEEmembership{Fellow,~IEEE} 
\thanks{This work is partially supported by ST Engineering-NTU Corporate Lab under the NRF Corporate Lab @ University Scheme and National Natural Science Foundation of China (NSFC) under Grant 61720106011 and 61903319. (Corresponding author: Lihua Xie.)}
\thanks{The authors are with the School of Electrical and Electronic Engineering, Nanyang Technological University, Singapore. (E-mail: fa0001xu@e.ntu.edu.sg; xiaolei@ntu.edu.sg;
elhxie@ntu.edu.sg).}
}

\maketitle

\begin{abstract}
This paper investigates the localization problem of a network in 2-D and 3-D spaces given the positions of anchor nodes in a global frame and inter-node relative measurements in local coordinate frames. 
It is assumed that the local frames of different nodes have different unknown orientations.
First, an angle-displacement rigidity theory is developed, which can be used to localize all the free nodes by the known positions of the anchor nodes and local relative measurements (local relative position, distance, local relative bearing, angle,  or ratio-of-distance measurements). Then,  necessary and sufficient conditions for network localizability are given. Finally, a distributed network localization protocol is proposed, which can globally estimate the locations of all the free nodes of
a network if the network is infinitesimally angle-displacement rigid. The proposed method 
unifies 
local-relative-position-based, distance-based, local-relative-bearing-based, angle-based, and ratio-of-distance-based distributed network localization approaches. The novelty of this work is that the proposed method can be applied in both generic and non-generic configurations with an unknown global coordinate frame in both 2-D and 3-D spaces.
\end{abstract}

\begin{IEEEkeywords}
Angle-displacement rigidity theory, unknown local frames, non-generic configuration, local relative measurement,
distributed localization, 2-D and 3-D spaces.
\end{IEEEkeywords}

\IEEEpeerreviewmaketitle

\section{Introduction}
\IEEEPARstart{L}{ocalization} is a prerequisite for large range of applications in target searching, obstacle avoidance, and formation control \cite{ zhao2018affine, lin2014fdistributed}. Network localization studies how to localize a network given the positions of anchor nodes and inter-node relative measurements. Compared with centralized network localization, distributed network localization can 
reduce the network resource consumption.  The existing distributed network localization algorithms, based on different kinds of measurements, can be divided into four categories: relative-position-based \cite{barooah2007estimation, lin2015distributed}, distance-based \cite{ diao2014barycentric, han2017barycentric}, bearing-based \cite{zhao2016localizability, li2019globally}, and angle-based \cite{jing2019angle1}.

There are two fundamental problems in network localization: \textit{network localizability} and  \textit{distributed localization protocols}.
The necessary and sufficient conditions for localizability of a network have been studied by many researchers. For the relative-position-based localization in 2-D space \cite{lin2015distributed}, a network is localizable  if and only if every free node is 2-reachable from the set of anchor nodes. Similar results have been shown in the distance-based localization \cite{diao2014barycentric}, where a network expressed by the barycentric coordinate is localizable if and only if every free node has at least three disjoint paths in 2-D space from the set of anchor nodes.  The localizability of bearing-based localization depends on the bearing rigidity theory  \cite{zhao2016localizability}. 
A bearing-based network is localizable in arbitrary dimensional spaces if and only if every infinitesimal bearing motion involves at least one anchor node. For the angle-based localization, a network is angle localizable in 2-D space if and only if it is angle fixable and anchors are not colinear \cite{jing2019angle1}. It should be noted that different localizability conditions are proposed for different relative measurements. A general condition for 2-D and 3-D network localizability for different local relative measurements is still lacking.

If a network is localizable, the corresponding relative-position-based, distance-based, bearing-based, or angle-based distributed protocols can globally estimate a network. However, there are limitations in the existing distributed protocols. 
The disadvantage of  relative-position-based and bearing-based distributed localization is the requirement of the global coordinate frame.
To address this issue, the first way to recover the true configuration is to construct a similar configuration \cite{lin2015distributed,lin2016distributed}, which is only applicable to 2-D space. The second way is to
align the orientations of each local frame with the global frame by using local relative bearing measurements \cite{eren2007using,zhong2014cooperative, van2018distributed}, 
but the orientation 
alignment errors along the sequence of propagation may be accumulated. The third way is to estimate the orientation of each local frame, but it requires additional relative orientation measurements \cite{li2019globally}.

Compared with relative-position-based and bearing-based distributed localization, the distance-based and angle-based distributed localization do not need the global coordinate frame. But the distance-based distributed network localization is only applicable to a generic configuration \cite{diao2014barycentric, han2017barycentric}, e.g., any three nodes are not on the same line in 2-D space and any four nodes are not on the same plane in 3-D space. For the angle-based network localization \cite{biswas2005semidefinite, bruck2009localization, jing2019angle1}, to the best of our knowledge, 
there is no such known result for the localizability and distributed protocols of angle-based networks in 3-D space. The works in \cite{biswas2005semidefinite, bruck2009localization, jing2019angle1} are applied to a 2-D angle-based network localization problem.
In addition,  
the unknown free nodes in \cite{jing2019angle1}
are localized one by one, i.e., the distributed algorithm works in a step-by-step iterative way.
{Moreover, for the recently developed ratio-of-distance rigidity theory \cite{cao2019ratio},
there is no 
corresponding network localization method.}

\begin{table*}[t]
    \begin{center}
    \caption{Comparison with existing distributed network localization methods.}
        \begin{tabular}{c|c c c c}
        \hline
        Approaches &References  & Measurements  & Limitations & Advantages of our method
      \\\hline
        
          Angle-based & \cite{jing2019angle1} & Angle
         & 2-D space & 3-D space \\  
         \hline
          Bearing-based & \cite{zhao2016localizability} & \hspace*{-0.15cm} Relative bearing
         &  Known global coordinate frame
        & \tabincell{c}{Unknown global coordinate frame}
        \\
          Bearing-based & \cite{lin2016distributed} & \hspace*{-0.15cm}  Local relative bearing &
         2-D space  & 3-D space
        \\
        \hline
         Distance-based &  \cite{diao2014barycentric, han2017barycentric} &Relative distance & Generic configuration  & Non-generic configuration    \\
         \hline
         Relative-position-based & \cite{barooah2007estimation, stacey2017role} & Relative position
        &  Known global coordinate frame
        & \tabincell{c}{Unknown global coordinate frame}
         \\
         Relative-position-based & \cite{lin2015distributed} & 
       Local relative position
         & 2-D space, generic configuration &   3-D space, non-generic configuration  \\
        \hline
        Ratio-of-distance-based & $-$ & Ratio-of-distance & $-$ & Solved in our work \\
        \hline
        \end{tabular}
\label{tab:accuracy-orb}
\\~\\
\begin{flushleft}
$-$: To the best of our knowledge, 
there exists no result for the localizability and distributed protocols of  ratio-of-distance-based networks. 

Compared with the distance-based localization protocol in \cite{aspnes2006theory}, the proposed method requires fewer communication channels.
\end{flushleft}
\end{center}
\end{table*}

In this paper, we aim to propose a unified approach
which not only can be applied to distributed network localization with local relative measurements (local relative position, distance, local relative bearing, angle and ratio-of-distance measurements), but also has advantages over the existing distributed network localization methods. The novelty of this work is that the proposed method can be applied in both generic and non-generic configurations with an unknown global coordinate frame in both 2-D and 3-D spaces. The comparison with existing distributed network localization methods is given in Table \ref{tab:accuracy-orb}.
The main contributions of this work are summarized below.

\begin{enumerate}
\item An angle-displacement rigidity theory is proposed to determine whether a set of angle and displacement constraints can uniquely characterize a network in 3-D space up to directly similar transformation, i.e., translation, rotation, and scaling. 
\item Based on the angle-displacement rigidity theory and an angle-displacement information matrix, we provide necessary and sufficient conditions for network localizability. 
\item A distributed network localization protocol is proposed, which can globally estimate the locations of all free nodes of
a network if the network is infinitesimally angle-displacement rigid.
\end{enumerate}

The remainder of this paper is organized as follows. Section \ref{pret} provides preliminaries and introduces {how to establish the direct similarity of two
networks by 
using angle and displacement constraints.} An angle-displacement rigidity theory is presented in Section \ref{hyper}. Section \ref{fomu} formulates the problem of displacement-constraint-based network localization, where the network localizability is analyzed. A distributed network localization protocol is presented in Section \ref{distri}. Section \ref{simulation} provides the details of simulation. Section \ref{coc} ends this paper with conclusions and recommendations for future work.

\section{Preliminaries}\label{pret}

\subsection{Notations}

The set of real numbers and real matrices are denoted by $\mathbb{R}$ and $\mathbb{R}^{m \times n}$, respectively. We use $\text{Null}(\cdot)$,   $\text{dim}(\cdot)$, $\text{Span}(\cdot)$, $| \cdot|$, and $\text{Rank}(\cdot)$ to represent the null space, dimension, span, cardinality, and rank of a matrix. $\| \cdot \|$ is the Euclidean norm of a vector or the spectral norm of a matrix. $\text{det}(\cdot)$ is the determinant of a square matrix. Denote $\otimes$ as the Kronecker product.
${I}_d$ stands for the identity matrix of dimension $d \times d$, and $\mathbf{1}_d$ and $\mathbf{0}_d$ the d-dimensional column vectors with all entries equal to $1$ and $0$, respectively.
An undirected graph is denoted by $\mathcal{G}=\{ \mathcal{V},\mathcal{E}\}$ consisting of a non-empty  node set $\mathcal{V}=\{1,  \cdots, n \}$ and an undirected edge set $\mathcal{E}  \subseteq \mathcal{V} \times \mathcal{V}$. $(i,j) \in \mathcal{E}$ is an undirected edge. $\kappa$ is a complete graph with $n$ nodes.

\subsection{Sensor Measurements}

{The nodes in a network are divided into two categories: 1) anchor nodes whose positions are known; 2) free nodes whose positions need to be determined. Let $\Sigma_g$ be a common \textit{global coordinate frame} in $\mathbb{R}^3$. Each node $i$ (including both anchor node and free node)} holds an unknown fixed \textit{local coordinate frame} $\Sigma_i$. Define $Q \in SO(3)$ as the 3-dimensional rotation matrix, where $SO(3)= \{ Q \in \mathbb{R}^{3 \times 3}: Q^TQ = QQ^T=I_3, \text{det}(Q)=1\}$. Let $Q_i$ be the unknown rotation matrix from $\Sigma_i$ to $\Sigma_g$. Denote $p_i, p_j \in \mathbb{R}^3$ as the positions of node $i$ and node $j$ in $\Sigma_g$.
Define
\begin{equation}\label{local}
e_{ij} = p_j -p_i = Q_i e_{ij}^{i}, \ \ g_{ij}= \frac{p_j -p_i}{d_{ij}}= Q_ig_{ij}^i, \ \ e_{ij}^{i} =  d_{ij}g_{ij}^i,
\end{equation}
where $e_{ij} \in \mathbb{R}^3$ is the relative position in $\Sigma_g$. $e_{ij}^{i}$ is the local relative position in $\Sigma_i$.
$g_{ij} \in \mathbb{R}^3$ is the relative bearing of $p_j$ with respect to $p_i$ in $\Sigma_g$. $g_{ij}^{i}$ is the local version of relative bearing in $\Sigma_i$.
$ d_{ij}=\|p_i-p_j \| \in \mathbb{R}$ is the distance between node $i$ and node $j$. $g_{ij}^Tg_{ik}={g^{i}_{ij}}^Tg^{i}_{ik}$ is the angle between edges $e_{ij}$ and $e_{ik}$. Next, we will introduce two kinds of constraints in a network: angle constraint and displacement constraint. 

\begin{figure}[t]
\centering
\includegraphics[width=1\linewidth]{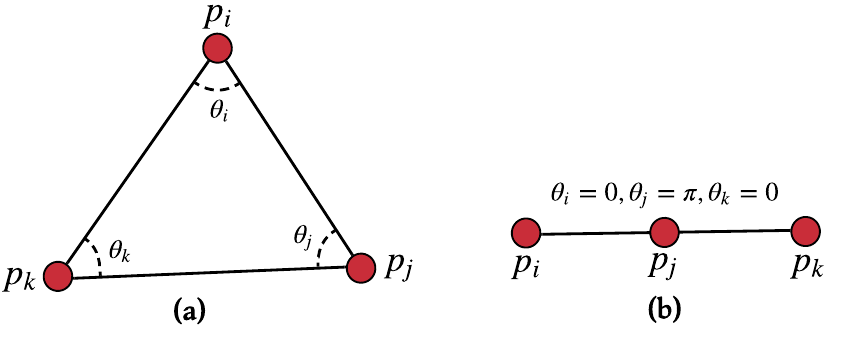}
\caption{Illustration of the angle constraints in Lemma \ref{motia} and Theorem \ref{angle12}. }
\label{motia}
\end{figure}

\subsection{Angle Constraint}\label{anc}

In this subsection, we will show that the angle constraints can be described by a group of parameters.

\begin{lemma}\label{angle}
For any three different nodes $p_i,p_j,p_k$ in $\mathbb{R}^3$ shown in Fig. \ref{motia},
there exist parameters $w_{ik}, w_{ki}, w_{ij}, w_{ji}, w_{jk}, w_{kj} \in \mathbb{R}$ such that
\begin{align}\label{ang1}
 &w_{ik}e_{ik}^Te_{ij}+w_{ki}e_{ki}^Te_{kj}=0, \\
  \label{aa1}
 &w_{ij}e_{ij}^Te_{ik}+w_{ji}e_{ji}^Te_{jk}=0, \\
 \label{aa2}
 &w_{jk}e_{jk}^Te_{ji}+w_{kj}e_{kj}^Te_{ki}=0,
\end{align}
with $w_{ik}^2+w_{ki}^2 \neq 0$, $w_{ij}^2+w_{ji}^2 \neq 0$, and $w_{jk}^2+w_{kj}^2 \neq 0$. 
\end{lemma}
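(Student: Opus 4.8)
The plan is to exploit the fact that the three stated identities are completely decoupled: \eqref{ang1} involves only the pair $(w_{ik},w_{ki})$, \eqref{aa1} only $(w_{ij},w_{ji})$, and \eqref{aa2} only $(w_{jk},w_{kj})$. Since the six parameters split into three disjoint pairs, I would treat each equation in isolation. Viewed this way, each identity is a single homogeneous linear equation in two real unknowns whose coefficients are the fixed scalar products determined by the three given points; for instance \eqref{ang1} reads $\alpha_1 w_{ik}+\beta_1 w_{ki}=0$ with $\alpha_1 := e_{ik}^Te_{ij}$ and $\beta_1 := e_{ki}^Te_{kj}$.

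The core step is then a one-line linear-algebra observation. The coefficient row $[\,\alpha_1\ \beta_1\,]$ of a single equation in two unknowns has rank at most one, so by rank--nullity its solution space has dimension at least one and therefore contains a nonzero vector. Concretely I would exhibit the solution $(w_{ik},w_{ki}) = (\beta_1,-\alpha_1)$, which satisfies \eqref{ang1} identically and yields $w_{ik}^2+w_{ki}^2 = \alpha_1^2+\beta_1^2$. The same substitution applies verbatim to \eqref{aa1} and \eqref{aa2}, in each case taking the weight pair to be the two scalar products appearing in the equation with the second one negated.

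The only point requiring care---and the nearest thing to an obstacle---is the non-triviality requirement $w_{ik}^2+w_{ki}^2\neq 0$, which for the explicit choice fails exactly when $\alpha_1 = \beta_1 = 0$. I would rule this out by cases: in the non-degenerate configuration, $\alpha_1=0$ and $\beta_1=0$ would force right angles at both $p_i$ and $p_k$, which is impossible in a triangle, whereas in the collinear case each such product equals $\pm d_{ik}d_{ij}\neq 0$. The identical argument covers the vertex pairs $(i,j)$ and $(j,k)$ attached to \eqref{aa1} and \eqref{aa2}. Should one prefer a purely algebraic route, it suffices to note that a homogeneous equation in two unknowns always admits a nontrivial solution, so one may simply pick any nonzero element of its null space, defaulting to $(1,0)$ in the degenerate subcase; this makes the existence claim immediate without any geometric input, while the similarity invariance of these dot-product constraints is deferred to the later development.
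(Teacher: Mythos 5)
Your proposal is correct and follows essentially the same route as the paper: both treat the three constraints as decoupled homogeneous linear equations in two unknowns and exhibit an explicit nonzero solution pair for each. The only cosmetic difference is that you use the single formula $(w_{ik},w_{ki})=\bigl(e_{ki}^Te_{kj},\,-e_{ik}^Te_{ij}\bigr)$ with a degeneracy check, whereas the paper's equation \eqref{para} splits into four cases using reciprocals $w_{ik}=1/(e_{ik}^Te_{ij})$, $w_{ki}=-1/(e_{ki}^Te_{kj})$ in the generic case; your observation that the doubly degenerate case $e_{ik}^Te_{ij}=e_{ki}^Te_{kj}=0$ cannot occur for three distinct points is a valid refinement the paper sidesteps by simply assigning $w_{ik}=w_{ki}=1$ there.
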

\begin{proof}
Denote the angles
between $e_{ij}$ and $e_{ik}$, $e_{ji}$ and $e_{jk}$,  $e_{ki}$ and $e_{kj}$ by $\theta_i, \theta_j, \theta_k \in [0, \pi]$, respectively.
Note that $e_{ik}^Te_{ij}=d_{ik}d_{ij}\cos \theta_i$ and $e_{ki}^Te_{kj}=d_{ik}d_{jk}\cos \theta_k$.
To ensure  $w_{ik}e_{ik}^Te_{ij}+w_{ki}e_{ki}^Te_{kj}=0$ and $w_{ik}^2+w_{ki}^2 \neq 0$, the parameters $w_{ik}$ and $w_{ki}$ can be designed as
\begin{equation}\label{para}
\begin{array}{ll}
&  \! \begin{array}{lll} w_{ik} = \frac{1}{e_{ik}^Te_{ij}},  w_{ki} = \frac{-1}{e_{ki}^Te_{kj}}, 
& e_{ik}^Te_{ij}, e_{ki}^Te_{kj} \neq 0, \\ 
w_{ik} = 1,  w_{ki} = 0, & 
e_{ik}^Te_{ij}=0, e_{ki}^Te_{kj} \neq 0, \\ 
w_{ik} = 0,  w_{ki} = 1, & 
e_{ik}^Te_{ij} \neq 0,   e_{ki}^Te_{kj} = 0,\\
w_{ik} = 1,  w_{ki} = 1, & 
e_{ik}^Te_{ij} = 0, e_{ki}^Te_{kj} = 0.
\end{array}
\end{array} \\
\end{equation}

Similarly, the parameters $w_{ij}$, $w_{ji}$, $w_{jk}$, $w_{kj}$ in \eqref{aa1} and \eqref{aa2} can also be designed. 
\end{proof}

\begin{defn} \textbf{(Angle Constraint)}
Equations \eqref{ang1},  \eqref{aa1}, and \eqref{aa2} are formally defined as angle constraints for the nodes $i, j, k$  w.r.t. edges $(i,j), (i,k), (j,k)$. 
\end{defn}

\begin{remark}
In network localization, the angle constraints are only constructed among the anchor nodes by the known anchor positions. If there is an undirected edge between anchor node $i$ and anchor node $j$,
their known positions are shared by communication, i.e., $e_{ij}=p_i\!-\!p_j$
is available for the anchor nodes $i,j$. Hence,
for the known anchor nodes $i, j, k$ w.r.t. undirected edges $(i,j), (i,k), (j,k)$, $e_{ij}, e_{ik}, e_{jk}$ are available for the anchor nodes $i,j,k$ and their corresponding angle constraints \eqref{ang1}-\eqref{aa2} can be constructed by \eqref{para}. 
\end{remark}

\begin{theorem}\textbf{(Property of Angle Constraint)}\label{angle12}
For any three different nodes $p_i,p_j,p_k$
in $\mathbb{R}^3$ shown in Fig. \ref{motia}, denote the angles 
between $e_{ij}$ and $e_{ik}$, $e_{ji}$ and $e_{jk}$,  $e_{ki}$ and $e_{kj}$ by $\theta_i, \theta_j, \theta_k \in [0, \pi]$, respectively.  The angles $\theta_i, \theta_j, \theta_k \in [0, \pi]$ are determined uniquely by the parameters $w_{ik}, w_{ki}, w_{ij}, w_{ji}, w_{jk}, w_{kj}$ in the angle constraints \eqref{ang1}-\eqref{aa2}.
\end{theorem}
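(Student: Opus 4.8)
The plan is to convert the three angle constraints into a single homogeneous linear system whose unknowns are the tangents of the three angles, read off the angle ratio from its null space, and then fix the common scale using the triangle angle-sum relation.

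First I would rewrite each constraint using $e_{ik}^Te_{ij}=d_{ik}d_{ij}\cos\theta_i$, $e_{ki}^Te_{kj}=d_{ik}d_{jk}\cos\theta_k$, and the analogous identities, cancelling the strictly positive common distance factor in each equation. This turns \eqref{ang1}--\eqref{aa2} into relations such as $w_{ik}d_{ij}\cos\theta_i+w_{ki}d_{jk}\cos\theta_k=0$. Since $p_i,p_j,p_k$ are coplanar and form a (possibly degenerate) triangle, the law of sines gives $d_{ij}:d_{ik}:d_{jk}=\sin\theta_k:\sin\theta_j:\sin\theta_i$; substituting this eliminates the unknown distances (the common scale cancels) and, after dividing by the cosines, leaves
\begin{equation}\label{tanrel}
w_{ki}\tan\theta_i+w_{ik}\tan\theta_k=0,\quad w_{ji}\tan\theta_i+w_{ij}\tan\theta_j=0,\quad w_{kj}\tan\theta_j+w_{jk}\tan\theta_k=0,
\end{equation}
i.e. $M\,(\tan\theta_i,\tan\theta_j,\tan\theta_k)^T=\mathbf{0}_3$ with $M$ the $3\times3$ matrix carrying the six parameters.

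The core of the argument is then linear-algebraic. I would show that $\det M=w_{ki}w_{ij}w_{jk}+w_{ik}w_{ji}w_{kj}=0$ holds automatically (this is the consistency of the three constraints), while the non-degeneracy conditions $w_{ik}^2+w_{ki}^2\neq0$, $w_{ij}^2+w_{ji}^2\neq0$, $w_{jk}^2+w_{kj}^2\neq0$ force $\mathrm{Rank}(M)=2$. Hence $\mathrm{Null}(M)$ is one-dimensional and the parameters determine the ratio $\tan\theta_i:\tan\theta_j:\tan\theta_k=\alpha:\beta:\gamma$ uniquely. Writing $\tan\theta_\bullet=\lambda\cdot(\text{corresponding component})$, I would invoke the angle-sum relation $\theta_i+\theta_j+\theta_k=\pi$ in its tangent form $\tan\theta_i+\tan\theta_j+\tan\theta_k=\tan\theta_i\tan\theta_j\tan\theta_k$, which yields $\lambda^2=(\alpha+\beta+\gamma)/(\alpha\beta\gamma)$ and fixes $|\lambda|$; the sign of $\lambda$ is then pinned down by the fact that a triangle has at most one obtuse angle, so exactly one sign choice gives an admissible sign pattern among $\tan\theta_i,\tan\theta_j,\tan\theta_k$. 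Each $\theta_\bullet\in[0,\pi]$ is then recovered unambiguously from its signed tangent.

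The main obstacle is the non-generic configurations, where the division steps above must be justified. When some angle equals $\pi/2$ its cosine vanishes and the relevant parameters are set by the special branches of \eqref{para}, so the tangent form and the product $\alpha\beta\gamma$ degenerate; I would handle these by passing to the equivalent cotangent relations (or by a limiting argument) and re-verifying that $\mathrm{Rank}(M)=2$ still holds. The collinear case, where the law of sines breaks down, must be treated separately, using that the angle-sum is still $\pi$ with one angle equal to $0$ or $\pi$, so the constraints again determine the angles. Establishing uniqueness across all these boundary cases, rather than the generic computation, is where the real work lies.
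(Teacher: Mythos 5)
Since the paper itself defers the proof of Theorem~\ref{angle12} to \cite{fangtechnique}, no direct comparison with the authors' argument is possible; your proposal must be judged on its own terms. Its generic core is sound and completable: dividing each constraint by the common distance factor and by the sines (via the law of sines) does yield a homogeneous $3\times 3$ system, each row is nonzero by $w_{ik}^2+w_{ki}^2\neq 0$ and its analogues, and the zero pattern of $M$ rules out rank one (proportionality of the three rows would force every coordinate of the common direction to vanish), so together with $\det M=0$ the null space is one-dimensional; a triangle identity then fixes the scale and the at-most-one-obtuse-angle observation fixes the sign. However, as written your main derivation has a soft spot you only patch in passing: you divide by cosines, and $\tan(\pi/2)$ is undefined, so both the claim that ``$\det M=0$ holds automatically'' (which rests on exhibiting the true angle vector in the null space) and the scale relation $\lambda^2=(\alpha+\beta+\gamma)/(\alpha\beta\gamma)$ fail exactly on the branches of \eqref{para} with a vanishing dot product. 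The cotangent route you relegate to a fallback should be the main road: dividing $w_{ik}\sin\theta_k\cos\theta_i+w_{ki}\sin\theta_i\cos\theta_k=0$ by $\sin\theta_i\sin\theta_k>0$ gives $w_{ik}\cot\theta_i+w_{ki}\cot\theta_k=0$ and its two analogues, $\cot$ is a bijection from $(0,\pi)$ onto $\mathbb{R}$ (finite at $\pi/2$), the cotangent vector of a nondegenerate triangle is never zero (all three angles cannot equal $\pi/2$), so $\det=0$ genuinely is automatic, and the scale is fixed uniformly by $\cot\theta_i\cot\theta_j+\cot\theta_j\cot\theta_k+\cot\theta_k\cot\theta_i=1$.

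The genuine gap is the collinear case, which you assert rather than prove, even though the paper explicitly states the theorem covers collinear $p_i,p_j,p_k$. Uniqueness there requires two things your proposal does not supply: (i) that the same parameter sextuple cannot be satisfied simultaneously by a collinear triple and by a nondegenerate triangle, and (ii) that the three collinear arrangements (which vertex carries the angle $\pi$) are mutually distinguishable by the parameters. Both are provable, and (i) is where the real content lies: if $k$ lies between $i$ and $j$, the constraints force $w_{ik}/w_{ki}=d_{jk}/d_{ij}$ and $w_{ij}/w_{ji}=-d_{jk}/d_{ik}$ (all four nonzero, since one vanishing parameter would force its partner to vanish, violating $w_{ik}^2+w_{ki}^2\neq 0$); feeding the resulting null direction $c_k=-(d_{jk}/d_{ij})c_i$, $c_j=(d_{jk}/d_{ik})c_i$ into the cotangent identity produces $c_i^2\,\frac{d_{jk}}{d_{ik}d_{ij}}\,(d_{ij}-d_{ik}-d_{jk})=1$, whose left side vanishes because collinearity turns the triangle inequality into the equality $d_{ij}=d_{ik}+d_{jk}$ --- so no triangle solution can coexist with the collinear one. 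Point (ii) follows from a sign-pattern check: which pairs among $(w_{ik},w_{ki})$, $(w_{ij},w_{ji})$, $(w_{jk},w_{kj})$ have equal versus opposite signs differs across the three arrangements. Without this piece the proof is incomplete precisely on the boundary the statement is advertised to cover; with it, and with the cotangent formulation promoted to the main argument, your plan goes through.
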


The proof of Theorem \ref{angle12} is given in \cite{fangtechnique}.
Note that Lemma \ref{angle} and \text{Theorem \ref{angle12}} are also applicable to the case that $p_i, p_j, p_k$ are on a line. Next, we will discuss the construction of displacement constraint based on different local relative measurements (local relative position, distance, local relative bearing, angle, or ratio-of-distance measurements).

\subsection{Displacement Constraint}\label{dissec}
Before exploring the displacement constraint, the following assumption is given.
\begin{assumption}\label{as3}
No two nodes are collocated in $\mathbb{R}^3$. For the five nodes $i,j,k,h,l$, node $i$
is the common neighboring node of nodes $j,k,h,l$.
\end{assumption}

For the node $i$ and its neighbors $j, k, h, l$ in $\mathbb{R}^3$,
the matrix $E_i=(e_{ij}, e_{ik}, e_{ih}, e_{il})$ $ \in \mathbb{R}^{3 \times 4}$ is a wide matrix. From the matrix theory,
there exists a non-zero vector $ \mu_i=(\mu_{ij}, \mu_{ik}, \mu_{ih}, \mu_{il})^T \in \mathbb{R}^4$ such that $E_i\mu_i= \mathbf{0}$, i.e.,

\begin{equation}\label{root}
     \mu_{ij}e_{ij}+\mu_{ik}e_{ik}+\mu_{ih}e_{ih} + \mu_{il}e_{il}= \mathbf{0},
\end{equation}
where $\mu_{ij}^2+\mu_{ik}^2+\mu_{ih}^2+\mu_{il}^2 \neq 0$.

\begin{defn}\textbf{(Displacement Constraint)}
Equation \eqref{root} is formally defined as a displacement constraint for node $i$ w.r.t. edges $(i,j), (i,k), (i,h), (i,l)$.
\end{defn}

Next, we will obtain the
parameters $\mu_{ij}, \mu_{ik}, \mu_{ih}, \mu_{il}$ in \eqref{root}
by using local relative position measurements.
Denote 
$e_{ij}^i, e_{ik}^i, e_{ih}^i, e_{il}^i$ as
the local relative positions measured by node $i$ in $\Sigma_i$.
Since $e_{ij} = Q_ie_{ij}^{i}, e_{ik} = Q_ie_{ik}^{i},e_{ih} = Q_ie_{ih}^{i}, e_{il} = Q_ie_{il}^{i}$,  \eqref{root} becomes
\begin{equation}\label{element}
Q_i  \left[ \!
\begin{array}{c c c c}
e_{ij}^{i} & e_{ik}^{i}  & e_{ih}^{i} &  e_{il}^{i} \\
\end{array}
\right]  \left[ \!
	\begin{array}{c}
	\mu_{ij} \\
	\mu_{ik} \\
	\mu_{ih} \\
	\mu_{il}
	\end{array}
	\right] = \mathbf{0}.
\end{equation}

Note that $Q_i^TQ_i=I_3$. Although the rotation matrix $Q_i$ is unknown, the non-zero vector $(\mu_{ij}, \mu_{ik}, \mu_{ih}, \mu_{il})^T$ in \eqref{element} can be obtained by solving an equivalent equation of \eqref{element} shown as
\begin{equation}\label{wmi}
\left[ \!
\begin{array}{c c c c}
e_{ij}^{i} & e_{ik}^{i}  & e_{ih}^{i} &  e_{il}^{i} \\
\end{array}
\right]  \left[ \!
	\begin{array}{c}
	\mu_{ij} \\
	\mu_{ik} \\
	\mu_{ih} \\
	\mu_{il}
	\end{array}
	\right] = \mathbf{0}.
\end{equation}

In a local-relative-position-based network in $\mathbb{R}^3$ under \text{Assumption \ref{as3}}, 
let 
$\mathcal{X}_{\mathcal{G}}=\{ ( i, j, k, h, l) \in \mathcal{V}^{5} : (i,j), (i,k), $ $ (i,h),  (i,l)  \in \mathcal{E},   j \!<\! k \!<\! h \!<\! l\}$. Each element of $\mathcal{X}_{\mathcal{G}}$ can be used to construct a local-relative-position-based displacement constraint \eqref{root}.

In addition, the displacement constraint \eqref{root} can also be obtained by distance, local relative bearing, angle, or ratio-of-distance measurements. The details are given in \cite{fangtechnique}. After introducing the angle constraint and displacement constraint, we are ready to establish the direct similarity of two
networks by 
using angle and displacement constraints.

\begin{figure}[t]
\centering
\includegraphics[width=1\linewidth]{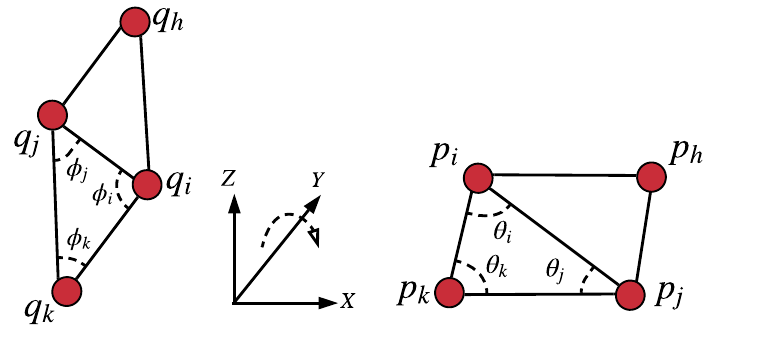}
\caption{ Two similar quadrilaterals. The network $\diamondsuit_{ijkh}(p)$ can be obtained by rotating $\diamondsuit_{ijkh}(q)$ around Y-axis by an angle $\frac{\pi}{2}$.}
\label{moti}
\end{figure}


\subsection{{Angle and Displacement Constraints based Similar Networks} }

Suppose that we aim to prove the direct similarity of two quadrilaterals $\diamondsuit_{ijkh}(p)$ and $\diamondsuit_{ijkh}(q)$ in $\mathbb{R}^3$ shown in Fig. \ref{moti}, where $p_i, p_j, p_k, p_h$ and $q_i, q_j, q_k, q_h$ are on two different planes, respectively.
Denote by $e'_{hi}\!=\!q_i\!-\!q_h, e'_{hk}\!=\!q_k\!-\!q_h, e'_{hj}\!=\!q_j\!-\!q_h$.
It is easy to prove that they are similar if the following conditions $\text{(a)-(b)}$ hold
\begin{enumerate}[(a)]
\item $ \mu_{hi}\!+\!\mu_{hj}\!+\!\mu_{hk} \neq 0$, $\mu_{hi}e_{hi}\!+\!\mu_{hj}e_{hj}\!+\!\mu_{hk}e_{hk}=\mathbf{0}$, 
$ \mu_{hi}e'_{hi}\!+\!\mu_{hj}e'_{hj}\!+\!\mu_{hk}e'_{hk}\!=\!\mathbf{0}$;
\item $\theta_{i}=\phi_{i}$, $\theta_{j}=\phi_{j}$, $\theta_{k}=\phi_{k}$.
\end{enumerate}

From \text{Theorem \ref{angle12}}, we can know that the angles $\theta_i, \theta_j, \theta_k$ are determined uniquely by the angle constraints \eqref{ang1}-\eqref{aa2}. Hence,
two quadrilaterals $\diamondsuit_{ijkh}(p)$ and $\diamondsuit_{ijkh}(q)$ are similar if the following conditions $\text{(a)-(e)}$ hold
\begin{enumerate}[(a)]
\item $ \mu_{hi}+\mu_{hj}+\mu_{hk} \neq 0$, $\mu_{hi}e_{hi}+\mu_{hj}e_{hj}+\mu_{hk}e_{hk}=\mathbf{0}$, $ \mu_{hi}e'_{hi}+\mu_{hj}e'_{hj}+\mu_{hk}e'_{hk}=\mathbf{0}$;
\item $w_{ik}e_{ik}^Te_{ij}\!+\!w_{ki}e_{ki}^Te_{kj}\!=\!0$,  $w_{ik}{e'_{ik}}^Te'_{ij}\!+\!w_{ki}{e'_{ki}}^Te'_{kj}\!=\!0$;
\item $w_{ij}e_{ij}^Te_{ik}\!+\!w_{ji}e_{ji}^Te_{jk}\!=\!0$,  $w_{ij}{e'_{ij}}^Te'_{ik}\!+\!w_{ji}{e'_{ji}}^Te'_{jk}\!=\!0$;
\item $w_{jk}e_{jk}^Te_{ji}\!+\!w_{kj}e_{kj}^Te_{ki}\!=\!0$,  $w_{jk}{e'_{jk}}^Te'_{ji}\!+\!w_{kj}{e'_{kj}}^Te'_{ki}\!=\!0$;
\item $w_{ik}^2+w_{ki}^2 \neq 0$, $w_{ij}^2+w_{ji}^2 \neq 0$, $w_{jk}^2+w_{kj}^2 \neq 0$.
\end{enumerate}

\begin{remark}
From \text{Theorem \ref{angle12}}, the conditions $\text{(b)-(e)}$ are equivalent to the conditions $\theta_{i}=\phi_{i}$, $\theta_{j}=\phi_{j}$, $\theta_{k}=\phi_{k}$.
\end{remark}

Note that $\mu_{hi}e_{hi}+\mu_{hj}e_{hj}+\mu_{hk}e_{hk}=\mathbf{0}$ is a displacement constraint. $w_{ik}e_{ik}^Te_{ij}\!+\!w_{ki}e_{ki}^Te_{kj}\!=\!0$, $w_{ij}e_{ij}^Te_{ik}\!+\!w_{ji}e_{ji}^Te_{jk}\!=\!0$, and $w_{jk}e_{jk}^Te_{ji}\!+\!w_{kj}e_{kj}^Te_{ki}\!=\!0$ are three angle constraints. Therefore,
two quadrilaterals $\diamondsuit_{ijkh}(p)$ and $\diamondsuit_{ijkh}(q)$ are similar if a displacement constraint $\mu_{hi}e_{hi}+\mu_{hj}e_{hj}+\mu_{hk}e_{hk}=\mathbf{0}$ with $\mu_{hi}+\mu_{hj}+\mu_{hk} \neq 0$ and three angle constraints \eqref{ang1}-\eqref{aa2}  for the nodes $i,j,k,h$ w.r.t. $p$ are equal to those w.r.t. $q$.

In this paper, we are interested in generalizing the above mentioned angle and displacement constraints in quadrilaterals to a set of angle and displacement constraints in more general networks in $\mathbb{R}^3$. The immediate question is whether a set of angle and displacement constraints can uniquely characterize a network up to directly similar transformations, i.e., translation, rotation, and scaling. The following sections provide some insights into this question.

\section{Angle-Displacement Rigidity Theory}\label{hyper}

A network in $\mathbb{R}^3$ is denoted by $(\mathcal{G}$, $p)$, where  $\mathcal{G}=\{ \mathcal{V},\mathcal{E}\}$ is an undirected graph and $p=(p_1^T, \cdots, p_{n}^T)^T \in \mathbb{R}^{3n}$ is a configuration in $\mathcal{G}$.
Denote by $|\mathcal{X}_{\mathcal{G}}|=m_d$ ($\mathcal{X}_{\mathcal{G}}$ is defined in Section \ref{dissec} and \cite{fangtechnique}).
Let $\Upsilon_{\mathcal{G}}=\{ ( i, j, k) \in \mathcal{V}^{3} : (i,j), (i,k), (j,k) \in \mathcal{E}, i< \! j \!<\! k \}$ with $|\Upsilon_{\mathcal{G}}|=m_r$. In this section, we focus on when the angle and displacement constraints can uniquely characterize a network in $\mathbb{R}^3$. The following function is first introduced.

For the angle constraints, the \textit{angle function} $B_{\Upsilon_{\mathcal{G}}}(p) : \mathbb{R}^{3n} \rightarrow \mathbb{R}^{m_r}$ is defined as
\begin{equation}\label{funa}
B_{\Upsilon_{\mathcal{G}}}(p)
\!=\!  (\cdots, w_{ik}e_{ik}^Te_{ij}+w_{ki}e_{ki}^Te_{kj}, \cdots)^T, 
\end{equation}
where $(i,j,k) \! \in \! \Upsilon_{\mathcal{G}}$, and $w_{ik} , w_{ki}$ are obtained by \eqref{para} which satisfy the condition
$w_{ik}e_{ik}^Te_{ij}\!+\!w_{ki}e_{ki}^Te_{kj}\!=\!0$ with $w_{ik}^2\!+\!w_{ki}^2 \! \neq \! 0$.

For the displacement constraints, the \textit{displacement function} $L_{\mathcal{X}_{\mathcal{G}}}(p) : \mathbb{R}^{3n} \rightarrow \mathbb{R}^{3m_d}$
is defined as 
\begin{equation}\label{func}
L_{\mathcal{X}_{\mathcal{G}}}(p)
\!=\! (\cdots, \mu_{ij}e_{ij}^T\!+\! \mu_{ik}e_{ik}^T\!+\! \mu_{ih}e_{ih}^T\!+\! \mu_{il}e_{il}^T, \cdots)^T,
\end{equation}
where $( i, j, k, h, l) \in \mathcal{X}_{\mathcal{G}}$ and $\mu_{ij}e_{ij}+\mu_{ik}e_{ik}+\mu_{ih}e_{ih} + \mu_{il}e_{il}= \mathbf{0}$ with $\mu_{ij}^2+\mu_{ik}^2+\mu_{ih}^2+\mu_{il}^2 \neq 0$.

Let $\mathcal{T}_{\mathcal{G}} = \Upsilon_{\mathcal{G}} \cup \mathcal{X}_{\mathcal{G}} $ and $m=3m_d+m_r$. Combining angle constraints and displacement constraints,
the \textit{angle-displacement function} $f_{\mathcal{T}_{\mathcal{G}}}(p) : \mathbb{R}^{3n} \rightarrow \mathbb{R}^{m}$ is defined as
\begin{equation}\label{cust}
    f_{\mathcal{T}_{\mathcal{G}}}(p) = (B_{\Upsilon_{\mathcal{G}}}^T(p), L^T_{\mathcal{X}_{\mathcal{G}}}(p)  )^T.
\end{equation}

We are now ready to define the fundamental concepts in angle-displacement rigidity theory. These concepts are defined analogously to those in the distance rigidity theory \cite{aspnes2006theory}, bearing rigidity theory \cite{zhao2015bearing}, and angle rigidity theory \cite{jing2019angle, jing2019multi}.
{Our proposed angle constraint $w_{ik}e_{ik}^Te_{ij}+w_{ki}e_{ki}^Te_{kj}=0$ with $w_{ik}^2+w_{ki}^2 \neq 0$ contains the relationship between $\theta_i$ and $\theta_k$, which
is different from the angle constraint $g_{ij}^Tg_{ik}=\cos \theta_i$ in \cite{jing2019angle} that only contains angle information $\theta_i$.
An advantage of the proposed angle constraint is that 
it can be used for analyzing the network localizability in 3-D space shown in Section \ref{fomu}}. 

\begin{remark}
 For two different configurations  $p$ and $p'$, their corresponding parameters $w_{ik}, w_{ki}$ in \eqref{funa} and $\mu_{ij}, \mu_{uk}, \mu_{ih}, \mu_{il}$ in \eqref{func} may be different. If the function $f_{\mathcal{T}_{\mathcal{G}}}(\cdot)$ in \eqref{cust} is customized
for the configuration $p$,
we have $f_{\mathcal{T}_{\mathcal{G}}}(p)= \mathbf{0}$, but $f_{\mathcal{T}_{\mathcal{G}}}(p')$ may not be a zero vector. It is easy to prove that if the configuration $p'$ satisfies $ p'=s(I_n \otimes  {Q})p +   \mathbf{1}_n\otimes \mathbf{t}, s \in \mathbb{R} \backslash \{0\}, {Q} \in SO(3),  \mathbf{t} \in \mathbb{R}^3$, we have $f_{\mathcal{T}_{\mathcal{G}}}(p')=\mathbf{0}$.
In addition, the reflection ambiguity \cite{chen2020} can be avoided by combining the angle constraint and displacement constraint.
\end{remark}

The \textit{angle-displacement rigidity matrix} is defined as
\begin{equation}\label{rigiditym}
R(p) = \frac{\partial f_{\mathcal{T}_{\mathcal{G}}}(p)}{\partial p} \in \mathbb{R}^{m \times 3n}. 
\end{equation}

An angle-displacement infinitesimal motion is a motion preserving the invariance of $f_{\mathcal{T}_{\mathcal{G}}}(p)$. In other words, an angle-displacement infinitesimal motion $\delta p$ satisfies $R(p)\delta p= 0$. 
Since  $ \mu_{ij}e_{ij}\!+\! \mu_{ik}e_{ik}\!+\! \mu_{ih}e_{ih}\!+\! \mu_{il}e_{il}=\mathbf{0}$ and $w_{ik}e_{ik}^Te_{ij}+w_{ki}e_{ki}^Te_{kj}=0$, the angle-displacement infinitesimal motions preserving the invariance of $f_{\mathcal{T}_{\mathcal{G}}}(p)$ include translations, rotations and scalings. 

\begin{defn}
An angle-displacement infinitesimal motion is called trivial if it corresponds to a translation, a rotation and a scaling of the entire network.
\end{defn}

\begin{defn}\label{cong}
(\textbf{Angle-displacement Equivalency and Congruency}) Let ($\kappa$, $p$) be the complete network, where 
$\kappa$ is the complete graph and $R^{\kappa}(p)$ is the corresponding angle-displacement rigidity matrix.  Two networks ($\mathcal{G}$, $p$) and ($\mathcal{G}$, $p'$) are angle-displacement equivalent if $R(p)p'=0$, and angle-displacement congruent if $R^{\kappa}(p)p'=0$.
\end{defn}

\begin{defn}
(\textbf{Angle-displacement Rigidity}) A network ($\mathcal{G}$, $p$) is angle-displacement rigid if
there exists a constant $\epsilon>0$ such that any network ($\mathcal{G}$, $p'$) that is angle-displacement equivalent to it and satisfies $\|p-p' \| < \epsilon$ is also angle-displacement congruent to it.
\end{defn}

\begin{defn}\label{grid}
(\textbf{Globally Angle-displacement Rigidity}) A network ($\mathcal{G}$, $p$) is globally angle-displacement rigid if it is angle-displacement congruent to any of its angle-displacement equivalent networks.
\end{defn}

\subsection{Properties of Angle-displacement Rigidity}

We next explore some important properties of  angle-displacement rigidity.

\begin{defn}
A network ($\mathcal{G}$, $p$) is {infinitesimally angle-displacement rigid} in $\mathbb{R}^3$ if all the angle-displacement infinitesimal motions satisfying $\frac{ \partial f_{\mathcal{T}_{\mathcal{G}}}(p)}{\partial p} \delta p = 0 $ are {trivial}, i.e., $\text{dim}(\text{Null}(R(p)))=7$. 
\end{defn}

\begin{lemma}\label{inva}
The trivial motion space for angle-displacement rigidity in $\mathbb{R}^3$ is $S=S_t \cup S_r \cup S_s$, where $S_t=\{ \mathbf{1}_n\otimes {I}_3 \}$ is the space including all infinitesimal motions that correspond to 
translational motions, $S_r= \{(I_n \otimes A)p, A+A^T =\mathbf{0}, A \in \mathbb{R}^{3 \times 3} \}$ is the space including all infinitesimal motions that correspond to rotational motions, and $S_s= \text{Span}(p)$ is the space including all infinitesimal motions that correspond to scaling motions.
\end{lemma}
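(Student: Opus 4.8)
The plan is to verify directly that every element of $S_t$, $S_r$, and $S_s$ lies in $\text{Null}(R(p))$ and corresponds to a translation, rotation, or scaling, respectively, so that these three subspaces together constitute the trivial motion space. The key observation is that, by the chain rule, for any smooth curve $p(t)$ with $p(0)=p$ and $\dot{p}(0)=\delta p$ along which the entries of $f_{\mathcal{T}_{\mathcal{G}}}$ (with the weights $w_{\cdot}$ and $\mu_{\cdot}$ frozen at the values customized for $p$) remain constant, we have $R(p)\delta p=\frac{d}{dt}f_{\mathcal{T}_{\mathcal{G}}}(p(t))\big|_{t=0}=\mathbf{0}$. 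Thus it suffices to exhibit, for each candidate motion, such an invariance-preserving curve, using that $f_{\mathcal{T}_{\mathcal{G}}}(p)=\mathbf{0}$ for the customized weights.

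First I would handle translations. For $\xi\in\mathbb{R}^3$ take $p(t)=p+t(\mathbf{1}_n\otimes\xi)$, so that $\delta p=(\mathbf{1}_n\otimes I_3)\xi=\mathbf{1}_n\otimes\xi\in S_t$. Every relative position is invariant, $e_{ij}(t)=p_j(t)-p_i(t)=e_{ij}$, hence each angle entry $w_{ik}e_{ik}^Te_{ij}+w_{ki}e_{ki}^Te_{kj}$ and each displacement entry $\mu_{ij}e_{ij}+\cdots+\mu_{il}e_{il}$ is unchanged, giving $f_{\mathcal{T}_{\mathcal{G}}}(p(t))\equiv f_{\mathcal{T}_{\mathcal{G}}}(p)=\mathbf{0}$ and therefore $R(p)(\mathbf{1}_n\otimes\xi)=\mathbf{0}$.

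Next, for rotations, given a skew-symmetric $A$ I set $Q(t)=e^{At}\in SO(3)$ and $p(t)=(I_n\otimes Q(t))p$, whose derivative at $t=0$ is $\delta p=(I_n\otimes A)p\in S_r$. Here $e_{ij}(t)=Q(t)e_{ij}$, so each inner product obeys $e_{ik}(t)^Te_{ij}(t)=e_{ik}^TQ(t)^TQ(t)e_{ij}=e_{ik}^Te_{ij}$ and each displacement sum becomes $Q(t)(\mu_{ij}e_{ij}+\cdots+\mu_{il}e_{il})=Q(t)\mathbf{0}=\mathbf{0}$; again $f_{\mathcal{T}_{\mathcal{G}}}(p(t))\equiv\mathbf{0}$. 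For scalings, take $p(t)=(1+t)p$, so $\delta p=p\in S_s=\text{Span}(p)$ and $e_{ij}(t)=(1+t)e_{ij}$. Each angle entry scales by $(1+t)^2$ and each displacement entry by $(1+t)$ relative to its value at $p$; since both are already zero at $p$, they remain zero, so $f_{\mathcal{T}_{\mathcal{G}}}(p(t))\equiv\mathbf{0}$. In all three cases the chain rule yields $R(p)\delta p=\mathbf{0}$, and by construction these motions are precisely the infinitesimal translations, rotations, and scalings, i.e.\ the trivial motions.

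The step I expect to require the most care is the scaling case: unlike translation and rotation, a uniform scaling does not leave the individual angle and displacement quantities invariant, so the argument relies essentially on the fact that $f_{\mathcal{T}_{\mathcal{G}}}(p)=\mathbf{0}$ together with the observation that the factors $(1+t)^2$ and $(1+t)$ multiply quantities that are identically zero. One must also keep the weights $w_{\cdot},\mu_{\cdot}$ frozen along each curve — consistent with the definition of $R(p)$ as the Jacobian of $f_{\mathcal{T}_{\mathcal{G}}}$ — rather than re-solving \eqref{para} and \eqref{root} at each $p(t)$. Finally, I would note that $S_t$, $S_r$, and $S_s$ have dimensions $3$, $3$, and $1$, so their sum is the $7$-dimensional trivial space underlying the definition of infinitesimal angle-displacement rigidity.
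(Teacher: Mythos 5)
Your proposal is correct, and it proves exactly what the paper's proof proves --- namely that translations, rotations, and scalings lie in $\text{Null}(R(p))$, with ``these are the trivial motions'' being a matter of definition --- but by a genuinely different route. The paper works row by row: it computes the rows $\eta_d^T$ and $\eta_r^T$ of $R(p)$ explicitly in \eqref{infid}--\eqref{infi} and verifies orthogonality against each candidate vector by direct expansion, the rotation case reducing $\eta_r^T(I_n\otimes A)p$ to a sum of terms each carrying the factor $A+A^T=\mathbf{0}$. You never touch the Jacobian entries: you integrate each infinitesimal motion to a one-parameter family ($p+t(\mathbf{1}_n\otimes\xi)$, $(I_n\otimes e^{At})p$, $(1+t)p$) along which $f_{\mathcal{T}_{\mathcal{G}}}$, with the weights frozen at their $p$-customized values, vanishes identically, and then apply the chain rule to get $R(p)\delta p=\mathbf{0}$. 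Your route is shorter and makes the scaling case honest: the paper dismisses $\eta_d^Tp=\mathbf{0}$ and $\eta_r^Tp=0$ as ``straightforward,'' and what actually underlies this is precisely the observation you isolate, that the angle and displacement entries scale by $(1+t)^2$ and $(1+t)$ but multiply quantities that are already zero at $p$. Your argument also generalizes verbatim to other dimensions and to any constraint function vanishing at $p$ that is invariant or homogeneous under the similarity group. What the paper's computation buys in exchange is the explicit rows \eqref{infid}--\eqref{infi} themselves, which are reused downstream (e.g., in establishing $R(p)p=(2B_{\Upsilon_{\mathcal{G}}}^T(p),L_{\mathcal{X}_{\mathcal{G}}}^T(p))^T$ before \eqref{anre}), so its extra labor is not wasted. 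Your caveat about keeping $w_{\cdot},\mu_{\cdot}$ frozen along the curves is exactly the point on which the chain-rule argument stands or falls, and you handle it correctly. Only your closing remark deserves a qualifier: $3+3+1=7$ is the dimension of $S_t+S_r+S_s$ for generic configurations, but it can drop for degenerate ones (e.g., collinear $p$); since the lemma itself makes no dimension claim, this does not affect your proof.
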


\begin{proof}
Let 
\begin{equation}
    \begin{array}{ll}
         &  \eta_d^T = \frac{\partial (\mu_{ij}e_{ij}\!+\! \mu_{ik}e_{ik}\!+\! \mu_{ih}e_{ih}\!+\! \mu_{il}e_{il}) }{\partial p} ,\\
         &  \eta_r^T = \frac{\partial (w_{ik}e_{ik}^Te_{ij}+w_{ki}e_{ki}^Te_{kj}) }{\partial p},
    \end{array}
\end{equation}
where $\eta_d^T$ and $\eta_r^T$ are the rows of $R(p)$. Then, we have
\begin{equation}\label{infid}
 \eta_d\!=\![\mathbf{0}, 	\!-\!\mu_{ij}\!-\!\mu_{ik}\!-\!\mu_{ih}\!-\!\mu_{il}, \mathbf{0}, \mu_{ij}, \mathbf{0}, \mu_{ik}, \mathbf{0}, \mu_{ih}, \mathbf{0}, \mu_{il},  \mathbf{0}]^T \otimes I_3.   
\end{equation}

\begin{equation}\label{infi}
	\eta_r \!=\! \left[ \!
	\begin{array}{c}
	\mathbf{0} \\
	2w_{ik}p_i+ (w_{ki}-w_{ik})p_j-(w_{ik}+w_{ki})p_k\\
	\mathbf{0} \\
	(w_{ki}-w_{ik})p_i+(w_{ik}-w_{ki})p_k \\
	\mathbf{0} \\
	-(w_{ik}+w_{ki})p_i+(w_{ik}-w_{ki})p_j+2w_{ki}p_k\\
	\mathbf{0} 
	\end{array}
	\right].
\end{equation}

Since $ \mu_{ij}e_{ij}\!+\! \mu_{ik}e_{ik}\!+\! \mu_{ih}e_{ih}\!+\! \mu_{il}e_{il}=\mathbf{0}$ and $w_{ik}e_{ik}^Te_{ij}+w_{ki}e_{ki}^Te_{kj}=0$,
for the scaling space $S_s$,  it is straightforward that $\eta_d^Tp =\mathbf{0}$ and $\eta_r^Tp =0$. For the translation space $S_t$, we have $\eta_d^T( \mathbf{1}_n\otimes {I}_3)=\mathbf{0}$ and $\eta_r^T( \mathbf{1}_n\otimes {I}_3)=\mathbf{0}$.  For the rotation space $S_r= \{(I_n \otimes A)p, A+A^T =\mathbf{0}, A \in \mathbb{R}^{3 \times 3} \}$, it follows that $\eta_d^T (I_n \otimes A) p = A (\mu_{ij}e_{ij}\!+\! \mu_{ik}e_{ik}\!+\! \mu_{ih}e_{ih}\!+\! \mu_{il}e_{il})=\mathbf{0}$ and 
\begin{equation}
\begin{array}{ll}
\eta_r^T (I_n \otimes A) p\\
= w_{ik}p_i^T(A+A^T)p_i \!+\! (w_{ki}\!-\!w_{ik})p_j^T(A\!+\!A^T)p_i 
\\
-(w_{ik}\!+\!w_{ki})p_k^T(A\!+\!A^T)p_i + (w_{ik}\!-\!w_{ki})p_k^T(A\!+\!A^T)p_j \\
+w_{ki}p_k^T(A\!+\!A^T)p_k =0.
\end{array}
\end{equation}

Then, the conclusion follows.
\end{proof}

\begin{theorem}
(\textbf{Property of Infinitesimally Angle-displacement Rigid})
Let $\kappa$ be a complete graph with $n$ nodes.
If a network ($\kappa$, $p$) is infinitesimally angle-displacement rigid in $\mathbb{R}^3$, then $f_{\kappa}^{-1}(f_{\kappa}(p)) = \{ s(I_n \otimes  {Q})p +   \mathbf{1}_n\otimes \mathbf{t}, s \in \mathbb{R} \backslash \{0\}, {Q} \in SO(3),  \mathbf{t} \in \mathbb{R}^3\}$, which is a $7$-dimensional manifold. 
\end{theorem}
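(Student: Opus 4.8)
The plan is to read $f_\kappa^{-1}(f_\kappa(p))$ as a level set of the smooth map $f_\kappa$ and the similarity orbit
\[ O = \{\, s(I_n \otimes Q)p + \mathbf{1}_n \otimes \mathbf{t} : s \in \mathbb{R}\setminus\{0\},\ Q \in SO(3),\ \mathbf{t} \in \mathbb{R}^3 \,\} \]
as a submanifold sitting inside that level set, and then to prove equality by matching dimensions and tangent spaces. One inclusion is free: by the remark preceding \eqref{rigiditym}, any $p'$ of the above form keeps every customized angle and displacement constraint satisfied, so $f_\kappa(p')=\mathbf{0}=f_\kappa(p)$ and thus $O \subseteq f_\kappa^{-1}(f_\kappa(p))$. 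The genuine content is the reverse inclusion together with the manifold claim, and I would extract both from the infinitesimal rigidity hypothesis.

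First I would convert infinitesimal rigidity into a constant-rank statement. The condition $\text{dim}(\text{Null}(R(p)))=7$ is equivalent to $\text{Rank}(R(p))=3n-7$. A nonvanishing $(3n-7)\times(3n-7)$ minor of $R(p)$ persists under small perturbations, so $\text{Rank}(R(p'))\ge 3n-7$ for $p'$ near $p$; conversely, by Lemma \ref{inva} the trivial motion space $S=S_t\cup S_r\cup S_s$ lies in $\text{Null}(R(p'))$ for every configuration, and since infinitesimal rigidity forces $\text{dim}(S)=7$ (otherwise $\text{Null}(R(p))$ would strictly contain $S$ or have dimension below $7$) and hence forces $p$ to affinely span $\mathbb{R}^3$, an open condition keeps $\text{dim}(S)=7$ nearby, giving $\text{Rank}(R(p'))\le 3n-7$. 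Thus $R(p)$ has locally constant rank $3n-7$, and the constant-rank theorem shows $f_\kappa^{-1}(f_\kappa(p))$ is, near $p$, a smooth manifold of dimension $3n-(3n-7)=7$ with tangent space $\text{Null}(R(p))=S$ at $p$.

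Next I would check that $O$ is itself a $7$-dimensional embedded submanifold. The map $\Phi(s,Q,\mathbf{t})=s(I_n\otimes Q)p+\mathbf{1}_n\otimes\mathbf{t}$ is smooth, and the image of its differential at $(1,I_3,\mathbf{0})$ is exactly $S$; since $p$ affinely spans $\mathbb{R}^3$ this image is $7$-dimensional, so $\Phi$ is an immersion and $O$ is a $7$-manifold with $T_pO=S$. Now $O$ and the level set are both $7$-dimensional manifolds through $p$ with the same tangent space $S$, and $O$ is contained in the level set; an embedded submanifold of equal dimension is relatively open, so $O$ and $f_\kappa^{-1}(f_\kappa(p))$ coincide on a neighborhood of $p$. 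Because infinitesimal rigidity is preserved under similarity transformations, the same argument runs at every point of $O$, so $O$ is open in the level set, and identifying $O$ with the connected smooth stratum of the level set through $p$ yields the claimed identity; $O$ being the image of the immersion $\Phi$ then gives the $7$-manifold conclusion.

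The main obstacle is the reverse inclusion, i.e.\ certifying that nothing outside the orbit satisfies all the constraints. The hypothesis is purely first-order at $p$, so it only delivers coincidence of $O$ with the level set locally; promoting this to the stated equality rests on the equal-dimension coincidence argument plus openness propagated along $O$, and one must restrict to the smooth $7$-dimensional stratum, since degenerate configurations (for instance a fully collocated one) also annihilate every customized constraint yet are singular points of the level set rather than points of $O$. An alternative, more global route I would keep in reserve is a direct constraint analysis for the complete graph: the displacement constraints would force $p'=(I_n\otimes A)p+\mathbf{1}_n\otimes b$ for some invertible $A\in\mathbb{R}^{3\times 3}$, the angle constraints (via Theorem \ref{angle12}) would then force $A$ to be conformal, i.e.\ $A=sQ'$ with $Q'$ orthogonal, and the combined angle-displacement constraints would exclude the reflection so that $Q'\in SO(3)$; the delicate step there is proving that preservation of all customized displacement constraints genuinely pins $p'$ down to an affine image of $p$.
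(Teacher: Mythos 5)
Your proposal is essentially a rigorization of the paper's own approach (treat $f_{\kappa}^{-1}(f_{\kappa}(p))$ as a level set and argue via the rank of the Jacobian), but it is considerably more careful, and the obstacle you flag at the end is not a defect of your write-up alone --- it is a genuine gap in the paper's proof. The paper's argument consists of two moves: its opening sentence \emph{asserts}, with no justification, that every $q \in f_{\kappa}^{-1}(f_{\kappa}(p))$ already has the form $s(I_n \otimes Q)p + \mathbf{1}_n \otimes \mathbf{t}$ --- i.e., it assumes exactly the reverse inclusion you single out as the genuine content --- and then it uses the chain rule to show $\text{Rank}\bigl(\partial f_{\kappa}(q)/\partial q\bigr) = 3n-7$ at every such $q$, concluding from smoothness that the level set is a properly embedded $7$-manifold. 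Your semicontinuity-plus-invariant-trivial-motions argument for locally constant rank, the tangent-space matching $T_pO = S = \text{Null}(R(p))$, and the equal-dimension openness step are precisely the scaffolding the paper omits; what you obtain (the orbit $O$ is open in the level set, with openness propagated along $O$) is strictly more than the paper proves, yet still short of the literal set equality --- and you say so honestly, where the paper begs the question.

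Two further points. First, your collocated-configuration observation shows the theorem, read literally over all of $\mathbb{R}^{3n}$, is false: $p' = \mathbf{1}_n \otimes c$ satisfies $f_{\kappa}(p') = \mathbf{0} = f_{\kappa}(p)$ (all $e'_{ij} = \mathbf{0}$ annihilate every customized constraint) yet lies outside $O$ since $s = 0$ is excluded; moreover the angle rows \eqref{infi} vanish there while the displacement rows \eqref{infid} are annihilated by the full $12$-dimensional affine family $(I_n \otimes A)p + \mathbf{1}_n \otimes b$, so the Jacobian rank drops below $3n-7$ and the level set is not even a manifold at such points. So no argument --- yours or the paper's --- can close the gap without restricting to a non-degenerate stratum, exactly as you propose. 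Second, one sub-claim of yours is wrong but harmless: infinitesimal angle-displacement rigidity does \emph{not} force $p$ to affinely span $\mathbb{R}^3$; the paper's coplanar networks (Theorem \ref{cor2}) are infinitesimally rigid with $\text{dim}(S)=7$ despite lying in a plane. Your argument only needs $\text{dim}(S)=7$ (an open condition), so the immersion and dimension counts survive, though for coplanar $p$ the map $\Phi$ is $2$-to-$1$ rather than injective (in odd dimension $\{sQ : s \neq 0,\, Q \in SO(3)\}$ already contains all improper conformal maps, e.g.\ $s=-1$ with $Q$ a rotation by $\pi$ about the plane's normal fixes a coplanar configuration) --- which also confirms your reserve-route remark that excluding reflections is a non-issue in $\mathbb{R}^3$, leaving the affine-pinning step as the one genuinely open ingredient that neither you nor the paper supplies.
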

\begin{proof}
	For any $q \in f_{\kappa}^{-1}(f_{\kappa}(p)), q=s(I_n \otimes  {Q})p +   \mathbf{1}_n\otimes \mathbf{t}, s \in \mathbb{R} \backslash \{0\}, {Q} \in SO(3),  \mathbf{t} \in \mathbb{R}^3$. From the chain rule, it yields
	\begin{equation}
	\begin{array}{ll}
\frac{\partial f_{\kappa}(q)}{\partial q} = 
\frac{\partial f_{\kappa}(p)}{\partial sp}(I_n \otimes Q^T).
	\end{array}
	\end{equation}
	
Then, we have $\text{Rank}(\frac{\partial f_{\kappa}(q)}{\partial q})=\text{Rank}(\frac{\partial f_{\kappa}(p)}{\partial sp}(I_n \otimes {Q}^T))=3n-7$. Since $ f_{\kappa} : \mathbb{R}^{3n} \rightarrow \mathbb{R}^{|\mathcal{T}_{\mathcal{\kappa}}|}$ is a smooth mapping, $f_{\kappa}^{-1}(f_{\kappa}(p))$ is a properly embedded manifold of dimension $3n - (3n\!-7)=7$.
\end{proof}

\begin{lemma}\label{grid1}
A  network ($\mathcal{G}$, $p$) is globally angle-displacement rigid if $\text{Null}(R^{\kappa}(p)) = \text{Null}(R(p))$. 
\end{lemma}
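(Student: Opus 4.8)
The plan is to reduce the statement to a single null-space containment. By Definition \ref{grid}, the network $(\mathcal{G}, p)$ is globally angle-displacement rigid exactly when every configuration $p'$ that is angle-displacement equivalent to it is also angle-displacement congruent to it. Unwinding Definition \ref{cong}, equivalence of $(\mathcal{G}, p')$ to $(\mathcal{G}, p)$ means $R(p)p' = 0$, i.e. $p' \in \text{Null}(R(p))$, while congruence means $R^{\kappa}(p)p' = 0$, i.e. $p' \in \text{Null}(R^{\kappa}(p))$. Hence global angle-displacement rigidity is equivalent to the inclusion $\text{Null}(R(p)) \subseteq \text{Null}(R^{\kappa}(p))$, and the whole lemma follows once this inclusion is extracted from the hypothesis.

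First I would record the reverse inclusion, which holds unconditionally. Since $\mathcal{G}$ is a subgraph of the complete graph $\kappa$, we have $\Upsilon_{\mathcal{G}} \subseteq \Upsilon_{\kappa}$ and $\mathcal{X}_{\mathcal{G}} \subseteq \mathcal{X}_{\kappa}$, so $\mathcal{T}_{\mathcal{G}} \subseteq \mathcal{T}_{\kappa}$. Because each angle parameter pair $(w_{ik}, w_{ki})$ depends only on the triple $(i,j,k)$ through \eqref{para}, and each displacement parameter vector $(\mu_{ij}, \mu_{ik}, \mu_{ih}, \mu_{il})$ depends only on the five-tuple $(i,j,k,h,l)$ through \eqref{root}, the row of $R(p)$ attached to a given constraint coincides with the corresponding row of $R^{\kappa}(p)$. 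Thus $R(p)$ is obtained from $R^{\kappa}(p)$ simply by deleting the rows indexed by $\mathcal{T}_{\kappa} \setminus \mathcal{T}_{\mathcal{G}}$. Deleting rows can only enlarge the null space, so $\text{Null}(R^{\kappa}(p)) \subseteq \text{Null}(R(p))$ always holds.

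Then I would invoke the hypothesis $\text{Null}(R^{\kappa}(p)) = \text{Null}(R(p))$, which supplies the missing inclusion $\text{Null}(R(p)) \subseteq \text{Null}(R^{\kappa}(p))$ directly. Combining with the first paragraph, any $p'$ with $R(p)p'=0$ also satisfies $R^{\kappa}(p)p'=0$, so every angle-displacement equivalent configuration is angle-displacement congruent, which is precisely global angle-displacement rigidity by Definition \ref{grid}.

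I expect the only point requiring care to be the bookkeeping in the second paragraph: verifying that the shared parameters force the rows of $R(p)$ to be genuine rows of $R^{\kappa}(p)$, so that the subgraph relation $\mathcal{T}_{\mathcal{G}} \subseteq \mathcal{T}_{\kappa}$ translates into an honest row-deletion and hence into the null-space containment. Once that identification is in place, everything reduces to an immediate translation of Definitions \ref{cong} and \ref{grid} into elementary linear algebra.
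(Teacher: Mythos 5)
Your proposal is correct and takes essentially the same route as the paper's own proof: unwind Definitions \ref{cong} and \ref{grid}, then use the hypothesis $\text{Null}(R^{\kappa}(p)) = \text{Null}(R(p))$ to pass from $R(p)p'=0$ (equivalence) to $R^{\kappa}(p)p'=0$ (congruence). Your second paragraph establishing the unconditional inclusion $\text{Null}(R^{\kappa}(p)) \subseteq \text{Null}(R(p))$ via row deletion is harmless but superfluous for this lemma, since the assumed equality already supplies the only inclusion the argument needs.
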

\begin{proof}
From Definition \ref{cong},
any network ($\mathcal{G}$, $p'$) that is equivalent to ($\mathcal{G}$, $p$) satisfies $R(p)p'=0$. It then follows from $\text{Null}(R^{\kappa}(p)) = \text{Null}(R(p))$ that $R^{\kappa}(p)p'=0$, which means that ($\mathcal{G}$, $p'$) is also congruent to ($\mathcal{G}$, $p$). From 
Definition \ref{grid}, we can know that ($\mathcal{G}$, $p$) is globally angle-displacement rigid.
\end{proof}

Next, a relationship between infinitesimally angle-displacement rigid and globally angle-displacement rigid is given.

\begin{theorem} If a network ($\mathcal{G}$, $p$) is infinitesimally angle-displacement rigid in $\mathbb{R}^3$, it is globally angle-displacement rigid.
\end{theorem}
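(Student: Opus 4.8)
The plan is to reduce the statement to Lemma \ref{grid1}, which asserts that $(\mathcal{G}, p)$ is globally angle-displacement rigid whenever $\text{Null}(R^{\kappa}(p)) = \text{Null}(R(p))$. It therefore suffices to establish this null-space equality under the hypothesis of infinitesimal angle-displacement rigidity, and the whole proof becomes a short sandwich argument against the trivial motion space.

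First I would exploit the edge containment $\mathcal{E} \subseteq \mathcal{E}(\kappa)$ at the level of constraint index sets: since $\Upsilon_{\mathcal{G}} \subseteq \Upsilon_{\kappa}$ and $\mathcal{X}_{\mathcal{G}} \subseteq \mathcal{X}_{\kappa}$, and since the parameters $w_{ik}, w_{ki}$ determined by \eqref{para} and $\mu_{ij}, \mu_{ik}, \mu_{ih}, \mu_{il}$ determined by \eqref{wmi} are attached to each tuple through the configuration $p$ rather than through the ambient graph, every row of $R(p)$ reappears verbatim as a row of $R^{\kappa}(p)$. Thus $R^{\kappa}(p)$ consists, up to reordering, of all the rows of $R(p)$ together with additional ones, which forces the inclusion $\text{Null}(R^{\kappa}(p)) \subseteq \text{Null}(R(p))$.

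Next I would pin both null spaces against the trivial motion space $S = S_t \cup S_r \cup S_s$ of Lemma \ref{inva}. The computations in the proof of Lemma \ref{inva} establishing $\eta_d^T \delta p = 0$ and $\eta_r^T \delta p = 0$ for every trivial $\delta p$ use only the defining identities of a single angle constraint $w_{ik}e_{ik}^Te_{ij}+w_{ki}e_{ki}^Te_{kj}=0$ or a single displacement constraint $\mu_{ij}e_{ij}+\mu_{ik}e_{ik}+\mu_{ih}e_{ih}+\mu_{il}e_{il}=\mathbf{0}$, and are insensitive to which edges the graph contains. Applying the same argument to the complete graph therefore yields $S \subseteq \text{Null}(R^{\kappa}(p))$, and combining this with the preceding inclusion gives the chain $S \subseteq \text{Null}(R^{\kappa}(p)) \subseteq \text{Null}(R(p))$.

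Finally I would close the chain using the hypothesis. Infinitesimal angle-displacement rigidity means $\text{dim}(\text{Null}(R(p))) = 7$, while $S$ is a $7$-dimensional subspace already contained in $\text{Null}(R(p))$, so necessarily $\text{Null}(R(p)) = S$; the displayed sandwich then collapses to $\text{Null}(R^{\kappa}(p)) = \text{Null}(R(p))$, and Lemma \ref{grid1} delivers the conclusion. The point I expect to require the most care — the main obstacle — is justifying that the row of $R(p)$ indexed by a given tuple genuinely coincides with the corresponding row of $R^{\kappa}(p)$, which rests on the parameters being intrinsic to the configuration and the tuple so that no inconsistency arises when a tuple is reinterpreted inside $\kappa$, together with confirming that $\text{dim}(S)=7$, i.e.\ that $S_t$, $S_r$, and $S_s$ contribute exactly three, three, and one independent directions at a configuration spanning $\mathbb{R}^3$.
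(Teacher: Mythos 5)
Your proposal is correct and takes essentially the same route as the paper's own proof: both establish the sandwich $\mathrm{Span}\{\mathbf{1}_n\otimes I_3,\, p,\, (I_n\otimes A)p\} \subseteq \mathrm{Null}(R^{\kappa}(p)) \subseteq \mathrm{Null}(R(p))$, the first inclusion by repeating the computations of Lemma~2 on the complete graph and the second from $\mathcal{T}_{\mathcal{G}} \subseteq \mathcal{T}_{\kappa}$ (row containment), then use $\dim(\mathrm{Null}(R(p)))=7$ to collapse the chain and invoke Lemma~3. Your explicit attention to the parameters $w_{ik}, w_{ki}, \mu_{ij},\ldots$ being attached to tuples through the configuration rather than the graph is a point the paper glosses over, but it is a refinement of the same argument, not a different one.
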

\begin{proof}

Similar to \textit{Lemma \ref{inva}}, it can be proved that $\text{Span}\{ \mathbf{1}_n\otimes I_3, p, (I_n \otimes A)p, A+A^T =\mathbf{0},  A \subseteq \mathbb{R}^{3 \times 3} \} \subseteq \text{Rank}(R^{\kappa}(p))$.
For any $\delta p \in \text{Null}(R^{\kappa}(p))$, we have $R^{\kappa}(p)(p+\delta p)=0$.
Since graph $\mathcal{T}_{\mathcal{G}}$ is a subgraph of graph $\mathcal{T}_\kappa$, we have $R(p)(p+\delta p)=0$ and hence $R(p)\delta p=0$. Then, we have $\text{Null}(R^{\kappa}(p)) \subseteq \text{Null}(R(p))$. Hence, $\text{Span}\{ \mathbf{1}_n\otimes I_3, p, (I_n \otimes A)p, A+A^T =\mathbf{0},  A \subseteq \mathbb{R}^{3 \times 3} \}\subseteq \text{Null}(R^{\kappa}(p)) \subseteq \text{Null}(R(p))$. If a network ($\mathcal{G}$, $p$) is infinitesimally angle-displacement rigid, we have $\text{Null}(R(p))= \text{Span}\{ \mathbf{1}_n\otimes I_3, p, (I_n \otimes A)p, A+A^T =\mathbf{0},  A \subseteq \mathbb{R}^{3 \times 3} \}$, which means that $\text{Null}(R^{\kappa}(p)) = \text{Null}(R(p))$. From \textit{Lemma \ref{grid1}}, we can know that ($\mathcal{G}$, $p$) is globally angle-displacement rigid.
\end{proof}

\subsection{Analysis of Angle and Displacement Constraints from the Angle-displacement Rigidity Theory}\label{reason}

The angle-displacement rigidity matrix $R(p)$ in \eqref{rigiditym} will be used for analyzing the network localizability in Section \ref{fomu} and designing distributed network localization protocols in Section \ref{distri}. Hence, we need to construct an available angle-displacement rigidity matrix $R(p)$.  For an
angle constraint $w_{ik}e_{ik}^Te_{ij}\!+\!w_{ki}e_{ki}^Te_{kj}\!=\!0$, its corresponding rows 
$\eta_r^T$ of $R(p)$ in \eqref{infi} contain the position information $p_i, p_j, p_k$. 
Since only the positions of the anchor nodes are known, to obtain the angle-displacement rigidity matrix $R(p)$, the angle constraints can only be constructed among the anchor nodes. The method to construct angle constraints among the anchor nodes is shown in \eqref{para}. 
For displacement constraint $ \mu_{ij}e_{ij}\!+\! \mu_{ik}e_{ik}\!+\! \mu_{ih}e_{ih}\!+\! \mu_{il}e_{il}=\mathbf{0}$, its corresponding row $\eta_d^T$ of $R(p)$ in \eqref{infid} does not contain any position information. Hence, the displacement constraints can be constructed among all the nodes. As introduced in Section \ref{dissec} and \cite{fangtechnique}, the local relative measurements (local relative position, distance, local relative bearing, angle, or ratio-of-distance measurements) can be used for constructing displacement constraints. 

\begin{remark}
Note that there already exist some works on rigidity theory with mixed types of geometric constraints \cite{kwon2018generalized,kwon2019hybrid,stacey2017role,  bishop2015distributed}.  
The nonlinearity of the constraints in \cite{kwon2018generalized,kwon2019hybrid,  bishop2015distributed} leads to relative-position-based distributed algorithms, which are not applicable to bearing-only-based, angle-only-based,  distance-only-based, or ratio-of-distance-based distributed network localization where the relative position measurements are unavailable to each node.
The rigidity theory in \cite{stacey2017role} can be used for solving the relative-position-based distributed network localization problem, but it needs global coordinate frame. The local-relative-position-based distributed network localization problem is not solved in \cite{stacey2017role}.
\end{remark}

\section{Angle-displacement rigidity theory based network localizability
}\label{fomu}

Denote $\mathcal{V}_a=\{ 1,\cdots, n_a \}$ as the anchor nodes set with $|\mathcal{V}_a|=n_a$, whose positions, denoted by $p_a = (p_1^T, \cdots, p_{n_a}^T)^T$,
are known. Denote $\mathcal{V}_f\!=\!\{ n_a \!+\! 1,  \cdots, n \}$ as
the free nodes set with $|\mathcal{V}_f|=n_f$,  whose positions, denoted by $p_f = (p_{n_a\!+\!1}^T, \cdots, p_{n}^T)^T$, 
need to be determined. Before presenting the main results, the following assumption made which is necessary for network localizability analysis.

\begin{assumption}\label{ass3}
No two nodes are collocated in $\mathbb{R}^3$. Each anchor node has at least two neighboring anchor nodes, and each free node has at least four neighboring nodes in $\mathbb{R}^3$. For  local-relative-bearing-based and angle-based networks, each free node and its neighbors are non-colinear. 
\end{assumption}

\begin{remark}
Note that if local relative position measurements are used to localize the network in the proposed method, the condition in
Assumption \ref{ass3} can be relaxed as: for a 
free node $i$ with less than four neighboring nodes, 
there is at least one neighboring node $j \in \mathcal{N}_i $ satisfying $|\mathcal{N}_j|\ge 4$, i.e., the node $j$ has at least four neighboring nodes $i, k, h, l$ (refer to Section \ref{dissec} and Assumption \ref{as3}). 
The assumption that each free node has at least $4$ neighboring nodes in $\mathbb{R}^3$ can also be found in the literature on network localization and formation control \cite{han2017barycentric,han2017tc, lin2015necessary}.
In addition, compared with the works
\cite{han2017barycentric,han2017tc, lin2015necessary} that require an additional generic assumption, e.g., any three nodes are not on the same line and any four nodes are not on the same plane, Assumption \ref{ass3} is mild. 
\end{remark}

In this section, the network localization problem is formulated under Assumption \ref{ass3}, and necessary and sufficient conditions will be provided for network localizability based on the proposed angle-displacement rigidity theory.

\subsection{3-D Network Localization}


The angle constraints among the anchor nodes are only used for analyzing the network localizability, while    
the displacement constraints among all the nodes are not only used for analyzing the network localizability, but also used for estimating the locations of free nodes. The problem of network localization is formally stated below. 

\begin{problem}
{(\textbf{Network Localization Problem)} Consider a local-relative-measurement-based network ($\mathcal{G}$, $p$) in $\mathbb{R}^3$ under Assumption \ref{ass3}.} The network localization is to determine the positions of the free nodes $p_f$, given the positions of the anchor nodes $p_a$ and displacement constraints in the graph $\mathcal{X}_{\mathcal{G}}$
\begin{equation}\label{pro}
	\begin{array}{ll}
& \mu_{ij}\hat e_{ij}\!+\! \mu_{ik}\hat e_{ik}\!+\! \mu_{ih}\hat e_{ih}\!+\! \mu_{il}\hat e_{il}= \mathbf{0}, \ \  ( i, j, k, h, l) \in \mathcal{X}_{\mathcal{G}}, \\
& \hat{p}_i =p_i, \ \ i \in \mathcal{V}_a,
\end{array}
\end{equation}
where  $ \hat p_i,  \hat p_{j}, \hat p_k, \hat p_h, \hat p_l$ are the estimates of $p_i, p_j, p_k, p_h, p_l$, and $\hat e_{ij}= \hat p_{j} \!-\! \hat p_i, \hat e_{ik}= \hat p_{k} \!-\! \hat p_i, \hat e_{ih}= \hat p_{h} \!-\! \hat p_i, \hat e_{il}= \hat p_{l} \!-\! \hat p_i$.
\end{problem} 

\begin{remark}
In our proposed localization approach, we only consider the same type of local relative measurements for all nodes. 
This same type of local relative measurements can be angle, distance, local relative bearing, local relative position, or ratio-of-distance measurements. As introduced in Section \ref{dissec} and \cite{fangtechnique}, the displacement constraint can be obtained by one same type local relative measurements. Hence, the network localization is implemented based on the known anchor positions and one same type of local relative measurements. 
\end{remark}


\begin{defn}
A network ($\mathcal{G}$, $p$) is called localizable if the solution $p$ to \eqref{pro} is unique. 
\end{defn}

The cost function of network localization is designed as
\begin{equation}\label{cost}
\begin{array}{ll}
& \hspace*{-0.5cm}    \min\limits_{\hat p \in \mathbb{R}^{3n}} J(\hat p) \!=\! \sum\limits_{ ( i,  j, k, h, l)  \in  \mathcal{X}_{\mathcal{G}}} \|\mu_{ij}\hat e_{ij}\!+\! \mu_{ik}\hat e_{ik}\!+\! \mu_{ih}\hat e_{ih}\!+\! \mu_{il}\hat e_{il}\|^2, \\
& \hspace*{-0.5cm} \text{subject} \  \text{to} \ \ \hat{p}_i =p_i, \ \ i \in \mathcal{V}_a,
\end{array}
\end{equation}
where $\hat p = (\hat p_1^T, \cdots, \hat p_n^T)^T$ is the estimate of $p$.
The immediate question is when the true location $p$ is the unique global minimizer of \eqref{cost}.
Based on \eqref{infid} and \eqref{infi}, we have
\begin{equation}
    R(p)p = (2B_{\Upsilon_{\mathcal{G}}}^T(p), L^T_{\mathcal{X}_{\mathcal{G}}}(p)  )^T \in \mathbb{R}^m.
\end{equation}

Then, we get
\begin{equation}\label{anre}
p^TR(p)^TR(p)p = 4\| B_{\Upsilon_{\mathcal{G}}}(p) \|^2 + J(p), 
\end{equation}
where $J(p)=\|L_{\mathcal{X}_{\mathcal{G}}}(p)\|^2$ according to \eqref{func}. As analyzed in Section \ref{reason}, the angle-displacement rigidity matrix $R(p)$ can be obtained by the known anchor positions and local relative measurements (local relative position, distance, local relative bearing, angle, or ratio-of-distance measurements), and $R(p)$ does not contain position information of the free nodes, i.e., $R(p) = R([\begin{array}{c}
    p_{a}  \\
    \mathbf{0}  
    \end{array}]) = R(\hat p)$. 
From \eqref{anre}, we have
\begin{equation}\label{anre1}
 4\| B_{\Upsilon_{\mathcal{G}}}(\hat p) \|^2 \!+\! J(\hat p)= \hat p^TR(\hat p)^TR(\hat p)\hat p = \hat p^TR(p)^TR(p)\hat p.  
\end{equation}

Since the angle constraints are only constructed among the known anchor nodes, we have $B_{\Upsilon_{\mathcal{G}}}(p) = B_{\Upsilon_{\mathcal{G}}}([\begin{array}{c}
    p_{a}  \\
    \mathbf{0}  
    \end{array}])=B_{\Upsilon_{\mathcal{G}}}(\hat p)=0$ from \eqref{funa}. Then, based on \eqref{anre1}, \eqref{cost} 
can be rewritten as an angle and displacement constraints based cost function
\begin{equation}\label{cost1}
\begin{array}{ll}
&  \min\limits_{\hat p \in \mathbb{R}^{3n}} J(\hat p) = 4\| B_{\Upsilon_{\mathcal{G}}}(\hat p) \|^2 \!+\! J(\hat p)=  \hat{p}^T R(p)^T R(p)  \hat{p}, \\
& \text{subject} \  \text{to} \ \ \hat{p}_i =p_i, \ \ i \in \mathcal{V}_a.
\end{array}
\end{equation}

Let $D = R(p)^T R(p)$ be angle-displacement
information matrix. Since the nodes 
are divided into anchor nodes $p_a$ and free nodes $p_f$, the angle-displacement information matrix $D$ can be partitioned as
\begin{equation}
    D = \left[\begin{array}{cc}
    D_{aa} & D_{af} \\
    D_{fa}  &  D_{ff}  
    \end{array}\right],
\end{equation}
where $D_{aa} \in \mathbb{R}^{3{n_a} \times 3{n_a}}$, $D_{fa}^T = D_{af} \in \mathbb{R}^{3{n_a} \times 3{n_f}}$, and $D_{ff} \in \mathbb{R}^{3{n_f} \times 3{n_f}}$.

\begin{remark}\label{ra1}
The angle function  $B_{\Upsilon_{\mathcal{G}}}(p)$ in \eqref{funa} only contains the angle constraints among the anchor nodes, i.e., the angle function $B_{\Upsilon_{\mathcal{G}}}(p)$ only influences $D_{aa}$. 
\end{remark}

Inspired by the work \cite{zhao2016localizability}, the unique global minimizer of \eqref{cost1}, algebraic condition, rigidity condition, and topological condition for localizability are given below.

\begin{lemma}\label{ls4}
For the cost function \eqref{cost1}, any minimizer $\hat p_f^*$ is also a global minimizer and satisfies
\begin{equation}
    D_{ff}\hat p_f^* + D_{fa}p_a = \mathbf{0}.
\end{equation}
\end{lemma}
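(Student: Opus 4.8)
The plan is to eliminate the anchor block using the constraint, thereby turning \eqref{cost1} into an \emph{unconstrained} convex quadratic program in the free coordinates, and then to invoke the elementary fact that for a convex quadratic every stationary point is a global minimizer. First I would use $\hat p_i = p_i$ for $i \in \mathcal{V}_a$, i.e.\ $\hat p_a = p_a$, to write $\hat p = (p_a^T, \hat p_f^T)^T$ and substitute into $J(\hat p) = \hat p^T D \hat p$. Expanding the block product and using the symmetry $D_{fa} = D_{af}^T$ to combine the two cross terms, the objective becomes a quadratic in the single free variable $\hat p_f \in \mathbb{R}^{3n_f}$,
\begin{equation}
J(\hat p_f) = p_a^T D_{aa} p_a + 2\,\hat p_f^T D_{fa} p_a + \hat p_f^T D_{ff}\hat p_f.
\end{equation}

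Next I would establish convexity. Since $D = R(p)^T R(p)$ is a Gram matrix, it is positive semidefinite, and hence its principal submatrix $D_{ff}$ is positive semidefinite as well. Therefore $J$ is a convex function of $\hat p_f$, with gradient
\begin{equation}
\nabla_{\hat p_f} J = 2\bigl(D_{ff}\hat p_f + D_{fa} p_a\bigr).
\end{equation}
For a convex differentiable function the first-order stationarity condition $\nabla_{\hat p_f} J = \mathbf{0}$ is both necessary and sufficient for global optimality. Consequently any minimizer $\hat p_f^*$ must satisfy $D_{ff}\hat p_f^* + D_{fa} p_a = \mathbf{0}$, and conversely every solution of this equation is a global minimizer; in particular every minimizer is a global minimizer, which is exactly the claim.

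The main subtlety to flag is that $D_{ff}$ is only positive \emph{semi}definite, not necessarily positive definite, so the minimizer need not be unique and one cannot write $\hat p_f^* = -D_{ff}^{-1} D_{fa} p_a$. This is precisely why the argument is routed through convexity—the equivalence of stationarity and global optimality for a convex quadratic—rather than through invertibility of $D_{ff}$. Uniqueness of the minimizer, and hence localizability, is a separate matter governed by the rank and null-space structure of $D_{ff}$, which I would defer to the subsequent localizability results.
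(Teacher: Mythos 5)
Your proof is correct, and the first half of it — the block expansion of $J$ and the identification of $D_{ff}\hat p_f^* + D_{fa}p_a = \mathbf{0}$ as the stationarity condition — is exactly the paper's computation. Where you diverge is in how global optimality of a stationary point is justified. You route the argument through convexity: $D_{ff}$ is a principal submatrix of the Gram matrix $D = R(p)^TR(p)$, hence positive semidefinite, so the quadratic is convex and stationarity is necessary and sufficient for global minimality. The paper instead exploits feasibility of the true configuration: since $f_{\mathcal{T}_{\mathcal{G}}}(p) = \mathbf{0}$, one has $R(p)p = \mathbf{0}$ and thus $Dp = \mathbf{0}$, whose lower block row gives $D_{ff}p_f + D_{fa}p_a = \mathbf{0}$; any stationary point can then be written $\hat p_f^* = p_f + x$ with $x \in \text{Null}(D_{ff})$, and a direct computation gives $J(\hat p^*) = x^TD_{ff}x = 0$, so every stationary point attains the trivial lower bound $J \ge 0$. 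The paper's route buys two extra facts that are reused immediately in Theorem \ref{the3}: the optimal value is exactly zero, and the true $p_f$ itself solves the normal equation — this is what converts nonsingularity of $D_{ff}$ into recovery of the \emph{true} positions via $p_f = -D_{ff}^{-1}D_{fa}p_a$. Your route is more self-contained (it never needs $Dp = \mathbf{0}$) and you correctly flag that $D_{ff}$ may be singular, but on its own it does not show that $p_f$ is among the minimizers; that observation would still have to be supplied at the localizability step, exactly as the paper does.
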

\begin{proof}
The cost function \eqref{cost1} can be rewritten as 
\begin{equation}
\min\limits_{\hat p_f \in \mathbb{R}^{3n_f}} \tilde J(\hat p_f) =  \hat{p}_f^T D_{ff} \hat{p}_f+  2{p}_a^T D_{af} \hat{p}_f+ {p}_a^T D_{aa} {p}_a. 
\end{equation}

Any minimizer $\hat p_f^*$ must satisfy 
$\bigtriangledown_{{\hat p_f}^*} \tilde J({\hat p_f}^*)=D_{ff} {{\hat p_f}^*} + D_{fa}p_a=0$. Since $p \in \text{Null}(R(p))$, we have $p \in \text{Null}(D)$, i.e., $D_{ff} p_f + D_{fa}p_a=0$. Denote $\hat p_f^*=p_f+x$ where $x \in \text{Null}(D_{ff})$. Let $\hat p^*=[p_a^T, (\hat p_f^*)^T]^T$. 
Then, we have $J(\hat p^*)= x^TD_{ff}x=0$. Hence, the cost function $J({\hat p^*})$ equals $0$.
\end{proof}

\begin{figure}[t]
\centering
\includegraphics[width=1\linewidth]{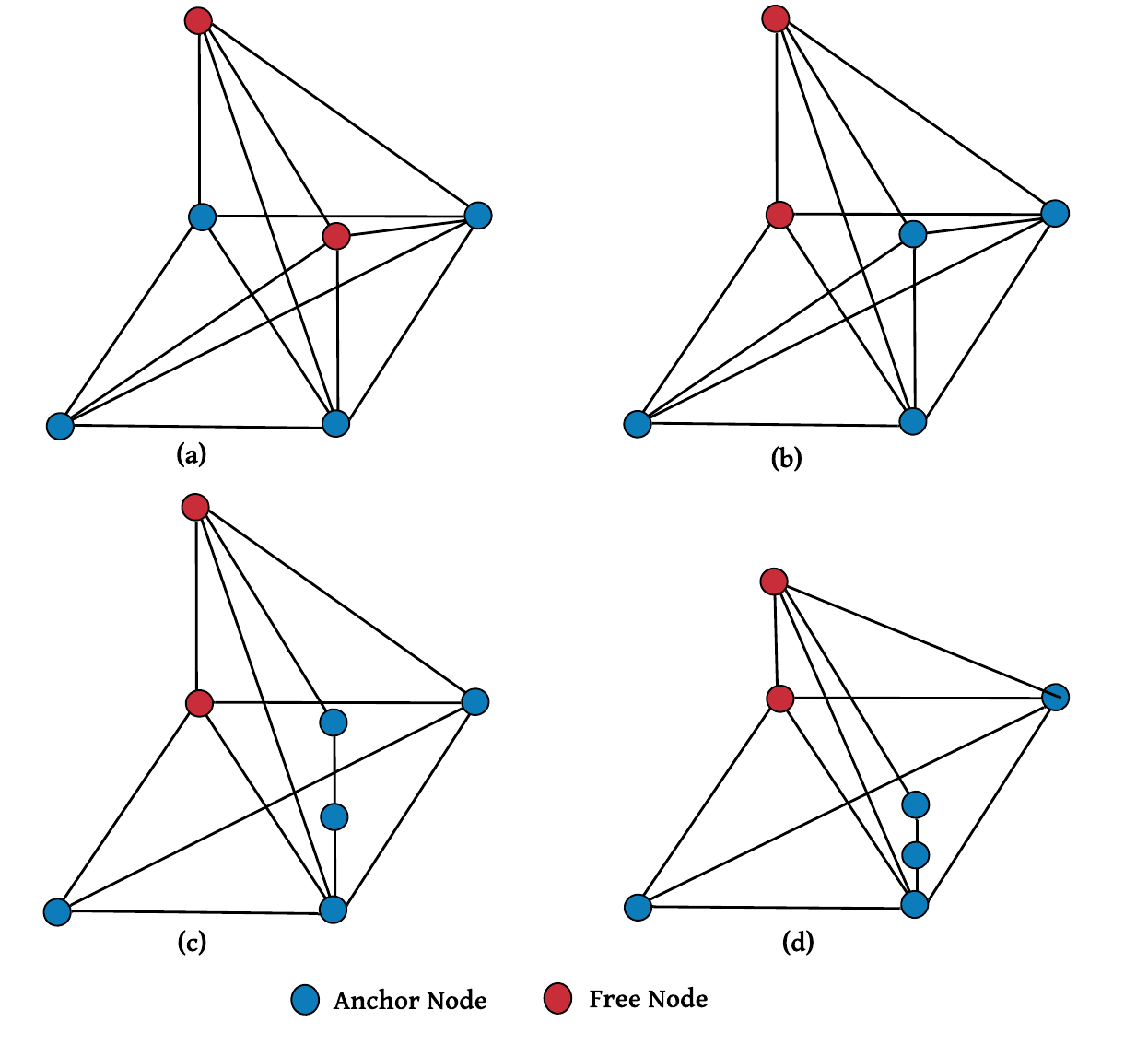}
\caption{Examples of 3-D networks. The blue dots are the anchor nodes and the red dots are the free nodes. The network in (a) is neither localizable nor infinitesimally angle-displacement rigid. The network in (b) is localizable and infinitesimally angle-displacement rigid.  The networks in (c) and (d) are localizable and have the same angle-displacement function, but they
are not infinitesimally angle-displacement rigid. }
\label{fig51}
\end{figure}

\begin{theorem}\label{the3}
(\textbf{Algebraic Condition for Localizability}) {Under Assumption \ref{ass3},} a local-relative-measurement-based network ($\mathcal{G}$, $p$) in $\mathbb{R}^3$ is localizable if and only if the matrix $D_{ff}$ is nonsingular. In this situation, the true positions of the free nodes can be obtained by ${p}_f = - D_{ff}^{-1}D_{fa}{p}_a$.
\end{theorem}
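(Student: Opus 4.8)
The plan is to reduce localizability --- defined as uniqueness of the solution of \eqref{pro} --- to invertibility of $D_{ff}$ by routing everything through the quadratic cost \eqref{cost1}. First I would note that a configuration $\hat p = (p_a^T, \hat p_f^T)^T$ solves \eqref{pro} exactly when every displacement residual vanishes, i.e. when $J(\hat p) = \|L_{\mathcal{X}_{\mathcal{G}}}(\hat p)\|^2 = 0$. Because the angle constraints are constructed only on the fixed anchor positions, $B_{\Upsilon_{\mathcal{G}}}(\hat p) = \mathbf{0}$, and because $R(p)$ carries no free-node position information (so $R(p)$ is a fixed, computable matrix), \eqref{cost1} expresses the cost as the genuine quadratic $J(\hat p) = \hat p^T D \hat p$ with $D = R(p)^T R(p) \succeq 0$. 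Hence the solution set of \eqref{pro} is exactly the set of zeros of this PSD quadratic subject to $\hat p_a = p_a$, which coincides with its set of global minimizers provided the minimum value is $0$. Lemma \ref{ls4} supplies precisely this: the minimum is $0$, and since $D_{ff} \succeq 0$ makes the reduced cost convex, its first-order stationarity condition $D_{ff}\hat p_f + D_{fa}p_a = \mathbf{0}$ characterizes the minimizer set exactly. Localizability is therefore equivalent to uniqueness of the solution of this affine linear system in $\hat p_f$.

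With this reformulation, both implications are short. For sufficiency, if $D_{ff}$ is nonsingular then $D_{ff}\hat p_f + D_{fa}p_a = \mathbf{0}$ has the unique solution $\hat p_f = -D_{ff}^{-1}D_{fa}p_a$. Since the true configuration is always a solution of \eqref{pro} --- all displacement constraints hold at $p$, so $p \in \text{Null}(R(p)) = \text{Null}(D)$ and thus $D_{ff}p_f + D_{fa}p_a = \mathbf{0}$ --- uniqueness forces $p_f = -D_{ff}^{-1}D_{fa}p_a$, giving both localizability and the stated closed form.

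For necessity I would argue by contraposition: suppose $D_{ff}$ is singular and choose $x \in \text{Null}(D_{ff})$ with $x \neq \mathbf{0}$. Then $p_f$ and $p_f + x$ both satisfy $D_{ff}(\cdot) + D_{fa}p_a = \mathbf{0}$, hence both are solutions of \eqref{pro}; being distinct, they show the network is not localizable. The step requiring the most care --- and the main obstacle in the argument --- is confirming that $p_f + x$ is a genuine solution of \eqref{pro}, with all residuals exactly zero, rather than merely a stationary point. This is exactly where positive semidefiniteness is indispensable: from $x \in \text{Null}(D_{ff})$ one has $(\mathbf{0}^T, x^T) D (\mathbf{0}^T, x^T)^T = x^T D_{ff} x = 0$, and a PSD matrix annihilates every vector on which its quadratic form vanishes, so $D (\mathbf{0}^T, x^T)^T = \mathbf{0}$; combined with Lemma \ref{ls4} this pins the cost at $p_f + x$ to the minimum value $0$ and certifies that every displacement constraint is satisfied. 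Combining the two directions yields the claimed equivalence together with the formula $p_f = -D_{ff}^{-1}D_{fa}p_a$.
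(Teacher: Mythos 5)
Your proof is correct and takes essentially the same route as the paper's: it reformulates \eqref{pro} through the quadratic cost \eqref{cost1}, invokes Lemma \ref{ls4} to identify the solution set with the affine system $D_{ff}\hat p_f + D_{fa}p_a = \mathbf{0}$ (using $p \in \text{Null}(D)$ for the true configuration), and reads off uniqueness from nonsingularity of $D_{ff}$. The extra care you take in the necessity direction---certifying via positive semidefiniteness that $p_f + x$ with $x \in \text{Null}(D_{ff})$ has exactly zero residuals rather than being merely stationary---is the same computation the paper carries out inside the proof of Lemma \ref{ls4}, where $J(\hat p^*) = x^T D_{ff} x = 0$.
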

\begin{proof}
From \textit{Lemma \ref{ls4}}, any minimizer 
$\hat p_f^*$ must satisfy $D_{ff}\hat p_f^* + D_{fa}p_a=0$. Hence, it is clear that the minimizer $\hat p_f^*$ is unique if and only if $D_{ff}$ is nonsigular. When $D_{ff}$ is nonsigular, we have $\hat p_f^*=-D_{ff}^{-1}D_{fa}p_a$. Since $D_{ff}p_f + D_{fa}p_a=0$, $\hat p_f^*$ equals the true location $p_f$.
\end{proof}

\begin{remark}
The rank condition in Theorem \ref{the3} basically implies that a network is localizable if the positions of unknown free nodes can be obtained by the known anchor nodes. Note that a node in $\mathbb{R}^2$ or $\mathbb{R}^3$ can be represented by its three or four neighboring nodes, and the
displacement constraint can be used to describe the relationship among the node and its neighboring nodes. 
Hence, a node with a known displacement constraint involving its neighboring nodes and itself can determine its own position given the positions of its neighboring nodes. In other words, the free nodes may be determined
based on known anchor positions and displacement constraints.

Motivated by the above fact,
based on the displacement constraints and known anchor positions, we have $D_{ff}p_f + D_{fa}p_a = \mathbf{0}$. It is clear that $p_f$ is localizable if the matrix $D_{ff}$ is full rank, i.e., the positions of free nodes $p_f = -D_{ff}^{-1}D_{fa}{p}_a$ can be obtained by those of anchor nodes.
\end{remark}

\begin{remark}\label{rr2}
Since the angle-displacement
information matrix $D= R(p)^T R(p)$ is a symmetric positive semi-definite matrix, we can know that the matrix $D_{ff}$ is also a symmetric positive semi-definite matrix. If $D_{ff}$ is nonsigular, it must be a symmetric positive definite matrix.
\end{remark}

\begin{theorem}\label{the4}
(\textbf{Rigidity Condition for Localizability}) {Under Assumption \ref{ass3},} a local-relative-measurement-based network ($\mathcal{G}$, $p$) in $\mathbb{R}^3$ is localizable if and only if every angle-displacement infinitesimal  motion involves at least one anchor node; that is, for any nonzero angle-displacement infinitesimal motion $  \delta p = [\delta p_a^T,  \delta p_f^T ]^T \in \text{Null}(D) $,   $\delta p_a \neq \mathbf{0}$.
\end{theorem}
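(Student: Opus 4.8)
The plan is to reduce this rigidity condition to the algebraic condition already proved in Theorem \ref{the3}, exploiting the positive semi-definite factorization $D = R(p)^T R(p)$. By Theorem \ref{the3}, localizability is equivalent to nonsingularity of $D_{ff}$, so it suffices to prove that $D_{ff}$ is nonsingular if and only if every nonzero $\delta p = [\delta p_a^T, \delta p_f^T]^T \in \text{Null}(D)$ satisfies $\delta p_a \neq \mathbf{0}$. I would establish this by proving the contrapositive equivalence, namely that $D_{ff}$ is singular if and only if $\text{Null}(D)$ contains a nonzero vector of the padded form $[\mathbf{0}^T, \delta p_f^T]^T$.

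The easy direction is a direct block computation. Suppose $\delta p = [\mathbf{0}^T, \delta p_f^T]^T$ with $\delta p_f \neq \mathbf{0}$ lies in $\text{Null}(D)$. Reading off the two block rows of $D\delta p = \mathbf{0}$ gives $D_{af}\delta p_f = \mathbf{0}$ and $D_{ff}\delta p_f = \mathbf{0}$; the latter exhibits a nonzero null vector of $D_{ff}$, so $D_{ff}$ is singular.

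The converse is the delicate step, and it is where the semi-definite structure is indispensable. Suppose $D_{ff}$ is singular and choose a nonzero $\delta p_f$ with $D_{ff}\delta p_f = \mathbf{0}$. A priori this only annihilates the lower block, and I still need $D_{af}\delta p_f = \mathbf{0}$ to place the padded vector $\delta p = [\mathbf{0}^T, \delta p_f^T]^T$ into $\text{Null}(D)$. The key observation is that $\delta p^T D \delta p = \delta p_f^T D_{ff} \delta p_f = 0$, and since $D = R(p)^T R(p)$ is positive semi-definite by Remark \ref{rr2}, this forces $\| R(p)\delta p \|^2 = 0$, hence $R(p)\delta p = \mathbf{0}$ and therefore $D\delta p = R(p)^T R(p)\delta p = \mathbf{0}$. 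Thus the padded $\delta p$ is a nonzero element of $\text{Null}(D)$ with zero anchor block, which closes the equivalence.

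Combining the two directions with Theorem \ref{the3} yields the claim. The main obstacle is exactly this converse: for a general symmetric matrix, a singular lower-right block need not produce a null vector of the whole matrix supported on the free coordinates, so the naive attempt to pad an element of $\text{Null}(D_{ff})$ fails. It is only the factorization $D = R(p)^T R(p)$ — equivalently the fact that $\delta p_f^T D_{ff}\delta p_f = 0$ upgrades to $R(p)\delta p = \mathbf{0}$ rather than merely $D_{ff}\delta p_f = \mathbf{0}$ — that rescues the argument and makes the rigidity and algebraic conditions coincide.
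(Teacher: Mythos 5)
Your proposal is correct and follows essentially the same route as the paper's proof: reduce to the algebraic condition of Theorem \ref{the3} and show $D_{ff}$ is singular if and only if $\text{Null}(D)$ contains a nonzero vector with zero anchor block, using $\delta p^T D \delta p = \delta p_f^T D_{ff}\delta p_f$. If anything, your write-up is slightly more careful than the paper's, since you make explicit the step the paper leaves implicit in its necessity direction --- that $\delta p^T D \delta p = 0$ together with the factorization $D = R(p)^T R(p)$ forces $R(p)\delta p = \mathbf{0}$ and hence $D\delta p = \mathbf{0}$, which is exactly why padding a null vector of $D_{ff}$ with zeros lands in $\text{Null}(D)$.
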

\begin{proof}
From \text{Theorem \ref{the3}}, a network is localizable if and only if $D_{ff}$ is nonsigular. We only need to prove that $D_{ff}$ is singular if and only if there exists a nonzero $\delta p \in \text{Null}(D)$ with $\delta p_a=\mathbf{0}$. (Sufficiency) If $\delta p \in \text{Null}(D)$ with $\delta p_a=\mathbf{0}$ and $\delta p_f \neq \mathbf{0}$, we have $\delta p_f^TD_{ff}\delta p_f=\delta p^TD\delta p=0$. Hence, $D_{ff}$ is singular. (Necessity) If $D_{ff}$ is singular, there exists a nonzero $\delta p_f$ satisfying $D_{ff}\delta p_f=\mathbf{0}$. Let $\delta p = (\delta p_a^T, \delta p_f^T)^T$ with $\delta p_a=\mathbf{0}$. We have $\delta p^TD\delta p=\delta p_f^TD_{ff}\delta p_f=0$. Hence, there exists a nonzero $\delta p \in \text{Null}(D)$ with $\delta p_a=\mathbf{0}$.
\end{proof}

Before exploring the topological condition for network localizability, we introduce two kinds of networks: non-coplanar network and coplanar network.  A network ($\mathcal{G}$, $p$) is called a non-coplanar network ($n > 4$) in $\mathbb{R}^3$ if its nodes are non-coplanar. A network ($\mathcal{G}$, $p$) is called a coplanar network ($n > 3$) if its nodes are coplanar but non-colinear.

\begin{lemma}\label{anchorn}

If there are only four coplanar anchor nodes $(n_a\!=\!4)$ in a non-coplanar network ($\mathcal{G}$, $p$), ($\mathcal{G}$, $p$) must not be localizable.
\end{lemma}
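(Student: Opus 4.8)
The plan is to argue via the rigidity condition for localizability, Theorem~\ref{the4}: the network fails to be localizable exactly when there is a nonzero $\delta p=[\delta p_a^T,\delta p_f^T]^T\in\text{Null}(D)$ with $\delta p_a=\mathbf{0}$. Recalling that $\text{Null}(D)=\text{Null}(R(p))$ since $D=R(p)^TR(p)$, it suffices to exhibit one explicit infinitesimal motion in $\text{Null}(R(p))$ that fixes all four anchors while displacing at least one free node.

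The crucial observation I would establish first is that a displacement constraint \eqref{root} is invariant not merely under similarities but under \emph{every} affine motion. Indeed, for a motion of the form $\delta p_i=Mp_i+\mathbf{t}$ with arbitrary $M\in\mathbb{R}^{3\times3}$ and $\mathbf{t}\in\mathbb{R}^3$, each relative displacement transforms as $\delta e_{ij}=M e_{ij}$, so the displacement row $\eta_d^T$ of \eqref{infid} yields $\eta_d^T\delta p=M(\mu_{ij}e_{ij}+\mu_{ik}e_{ik}+\mu_{ih}e_{ih}+\mu_{il}e_{il})=\mathbf{0}$. Thus \emph{any} affine motion annihilates the displacement block of $R(p)$. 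On the other hand, the angle rows $\eta_r^T$ in \eqref{infi} only involve anchor positions, because angle constraints are constructed solely among anchors; hence they are annihilated by any motion fixing all anchors.

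Next I would build such a motion from the coplanarity of the anchors. Let $\Pi$ be a plane containing the four anchors, written as $\{x:n^Tx=c\}$ with unit normal $n$ (if the anchors happen to be collinear one simply chooses, among the planes through their common line, one that a free node avoids). Take $M=nn^T$ and $\mathbf{t}=-cn$, giving $\delta p_i=n(n^Tp_i-c)$. This vanishes at every point of $\Pi$, in particular at all four anchors, so $\delta p_a=\mathbf{0}$; and it equals $n(n^Tp_i-c)\neq\mathbf{0}$ at any node off $\Pi$. Because the network is non-coplanar while the anchors lie on $\Pi$, at least one free node cannot lie on $\Pi$ (otherwise all $n$ nodes would be coplanar), so $\delta p_f\neq\mathbf{0}$. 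Combining the two invariances, $R(p)\delta p=\mathbf{0}$, i.e.\ $\delta p$ is a nonzero element of $\text{Null}(D)$ with $\delta p_a=\mathbf{0}$, and Theorem~\ref{the4} then forces the network to be non-localizable.

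The only genuinely delicate point is the affine-invariance identity for the displacement constraints, since this is what enlarges $\text{Null}(D)$ beyond the seven-dimensional trivial-motion space of Lemma~\ref{inva} when the anchors are degenerately placed; the rest is the direct construction of the shear $\delta p_i=n(n^Tp_i-c)$ normal to the anchor plane. I would also remark that this motion is in general \emph{not} a trivial (translation/rotation/scaling) motion, which is precisely why four coplanar anchors cannot pin down a non-coplanar network.
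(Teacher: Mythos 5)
Your proof is correct, and it reaches the paper's conclusion by a genuinely different route. The paper proves Lemma~\ref{anchorn} by reflecting the free nodes across the anchor plane $S$ and verifying, in a five-case analysis (split by how many anchors versus free nodes enter a given constraint \eqref{root}), that the reflected configuration $p'$ satisfies the very same displacement constraints, whence $(\mathcal{G},p)$ and $(\mathcal{G},p')$ share the angle-displacement function and Theorem~\ref{the4} applies. You instead prove once and for all that the displacement rows \eqref{infid} annihilate \emph{every} affine motion $\delta p_i=Mp_i+\mathbf{t}$ (since $\delta e_{ij}=Me_{ij}$ and the coefficients satisfy \eqref{root}), note that the angle rows \eqref{infi} have support only on anchor blocks, and then exhibit the normal shear $\delta p_i=n(n^Tp_i-c)$ vanishing exactly on the anchor plane; non-coplanarity puts a free node off the plane, so $\delta p$ is a nonzero null vector of $D$ with $\delta p_a=\mathbf{0}$, and Theorem~\ref{the4} finishes. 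Geometrically the two arguments are the same motion --- your $\delta p$ is, up to the factor $-2$, precisely $p'-p$ for the paper's reflection $p'=(I-2nn^T)p+2cn$ --- but your verification is sharper in two respects: it replaces the case analysis with a one-line algebraic identity, and it makes rigorous the step the paper states somewhat informally (passing from ``$p$ and $p'$ have the same angle-displacement function'' to ``there is an angle-displacement infinitesimal motion involving no anchor''), which is legitimate here exactly because the displacement constraints are linear in positions so the finite difference $p'-p$ lies in $\text{Null}(R(p))$. Your affine-invariance lemma also generalizes immediately: the same construction with a plane through a line of anchors yields Lemma~\ref{anchorn1}, and it handles any degenerate anchor placement admitting an affine motion that fixes the anchors, whereas the paper must redo the case analysis there; what the paper's version buys in exchange is an explicit second configuration satisfying \eqref{pro}, i.e., a finite (not merely infinitesimal) ambiguity. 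Your closing remark that this motion is nontrivial, hence lies outside the seven-dimensional space of Lemma~\ref{inva}, is also correct and worth keeping, and your parenthetical treatment of collinear anchors is sound since non-coplanarity of the whole network guarantees a free node off any chosen plane through the anchors.
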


The proof of \textit{Lemma \ref{anchorn}} is give in the Appendix-\textcolor{blue}{A}. Based on \textit{Lemma \ref{anchorn}}, it is easy to prove that there must be at least four non-coplanar anchor nodes $(n_a\! \ge \!4)$ if a non-coplanar network ($\mathcal{G}$, $p$) is localizable.

\begin{theorem}\label{t3}
(\textbf{Topological Condition for Localizability of Non-coplanar Network}) {Under Assumption \ref{ass3},}
a non-coplanar network ($\mathcal{G}$, $p$) with at least four non-coplanar anchor nodes $n_a \! \ge \! 4$ in $\mathbb{R}^3$ is localizable if it is infinitesimally angle-displacement rigid.
\end{theorem}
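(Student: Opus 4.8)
The plan is to reduce localizability to the rigidity criterion of Theorem \ref{the4} and then exploit the explicit description of the trivial motion space from Lemma \ref{inva}. First I would note that since $D = R(p)^T R(p)$ we have $\text{Null}(D) = \text{Null}(R(p))$, so infinitesimal angle-displacement rigidity, i.e. $\text{dim}(\text{Null}(R(p)))=7$, together with the inclusion $S \subseteq \text{Null}(R(p))$ established in Lemma \ref{inva} (and $\text{dim}(S)=7$ for a non-coplanar configuration), forces $\text{Null}(D)$ to coincide exactly with the trivial motion space $S = S_t \cup S_r \cup S_s$. By Theorem \ref{the4}, the network is localizable if and only if every nonzero $\delta p \in \text{Null}(D)$ satisfies $\delta p_a \neq \mathbf{0}$; hence it suffices to show that the only trivial motion whose anchor block vanishes is the zero motion.

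Next I would write a general trivial motion in node-wise form. Any $\delta p \in S$ has $\delta p_i = \mathbf{t} + (A + cI_3)p_i$ for each node $i$, where $\mathbf{t} \in \mathbb{R}^3$ is the translation, $A \in \mathbb{R}^{3\times 3}$ with $A + A^T = \mathbf{0}$ encodes the rotation, and $c \in \mathbb{R}$ is the scaling parameter. Setting $M = A + cI_3$, the assumption $\delta p_a = \mathbf{0}$ means $\mathbf{t} + M p_i = \mathbf{0}$ for all anchors $i \in \mathcal{V}_a$. Subtracting the equations corresponding to two anchors then yields $M(p_i - p_j) = \mathbf{0}$ for every pair of anchor nodes.

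The key geometric step is to invoke the hypothesis $n_a \ge 4$ with non-coplanar anchors: four non-coplanar points produce three linearly independent difference vectors that span $\mathbb{R}^3$, so $M(p_i - p_j) = \mathbf{0}$ over all anchor pairs forces $M = \mathbf{0}$. Taking the trace of $M = A + cI_3 = \mathbf{0}$ and using $\text{tr}(A) = 0$ (from skew-symmetry) gives $3c = 0$, hence $c = 0$ and then $A = \mathbf{0}$; substituting back into $\mathbf{t} + M p_1 = \mathbf{0}$ gives $\mathbf{t} = \mathbf{0}$. Therefore $\delta p = \mathbf{0}$, so no nonzero trivial motion leaves all anchors fixed, and Theorem \ref{the4} delivers localizability.

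I expect the main obstacle to be pinning down that the relevant null space is exactly the $7$-dimensional trivial space, so that every admissible motion takes the affine form $\mathbf{t} + (A + cI_3)p_i$, and then correctly separating the rotation and scaling contributions. The non-coplanarity of the four anchors is precisely what is needed to annihilate the combined matrix $M$, while the trace argument is what disentangles the scaling parameter $c$ from the skew part $A$; the remaining steps are linear-algebraic and routine.
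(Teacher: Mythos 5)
Your proposal is correct and follows essentially the same route as the paper: both use infinitesimal angle-displacement rigidity to identify $\text{Null}(D)$ with the $7$-dimensional trivial motion space of Lemma \ref{inva}, and then apply Theorem \ref{the4} by showing that no nonzero trivial motion can vanish on four non-coplanar anchors. Your explicit intermediate steps---writing $\delta p_i = \mathbf{t} + M p_i$ with $M = A + cI_3$, deducing $M = \mathbf{0}$ from the spanning anchor differences, then $\mathbf{t} = \mathbf{0}$ (the trace argument separating $c$ from $A$ is correct but unnecessary once $M = \mathbf{0}$)---simply fill in details that the paper's proof asserts without elaboration.
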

\begin{proof}
Since ($\mathcal{G}$, $p$) is infinitesimally angle-displacement rigid, we have $\text{Null}(D) = \text{Span}\{ \mathbf{1}_n\otimes I_3, p, (I_n \otimes A)p, A+A^T =\mathbf{0},  A \in \mathbb{R}^{3 \times 3} \}$. 
Hence, any angle-displacement infinitesimal motion $\delta p$ can be expressed as 
$\delta p= sp + 
r(I_n \otimes  {A})p +   \mathbf{1}_n\otimes \mathbf{t}, s, r \in \mathbb{R}, \mathbf{t} \in \mathbb{R}^3 $. Since there are at least four non-coplanar anchor nodes and no two anchor nodes collocate, there does not exist a nonzero $\delta p$ satisfying $\delta p_a = \mathbf{0}$.  From \text{Theorem \ref{the4}}, we can know that ($\mathcal{G}$, $p$)  is localizable.
\end{proof}

\begin{remark}
If a network is not infinitesimally angle-displacement rigid, it may be localizable. For example, for the networks in Fig. \ref{fig51}$\textcolor{blue}{\text{c}}$ and Fig. \ref{fig51}$\textcolor{blue}{\text{d}}$, they are localizable but not infinitesimally angle-displacement rigid because
there is not enough number of edges among the anchor nodes that can be used for constructing the angle constraints.  The networks in Fig. \ref{fig51}$\textcolor{blue}{\text{c}}$ and \ref{fig51}$\textcolor{blue}{\text{d}}$ have the same angle-displacement function, but they are not infinitesimally angle-displacement rigid
because
there is a nontrivial infinitesimal angle-displacement motion $\delta p$ moving the network in Fig. \ref{fig51}$\textcolor{blue}{\text{c}}$ to the network in Fig. \ref{fig51}$\textcolor{blue}{\text{d}}$,   which is not in $\text{Span}\{ \mathbf{1}_n\otimes I_3, p, (I_n \otimes A)p, A+A^T =\mathbf{0},  A \in \mathbb{R}^{3 \times 3} \}$.
\end{remark}

\begin{figure}[t]
\centering
\includegraphics[width=1\linewidth]{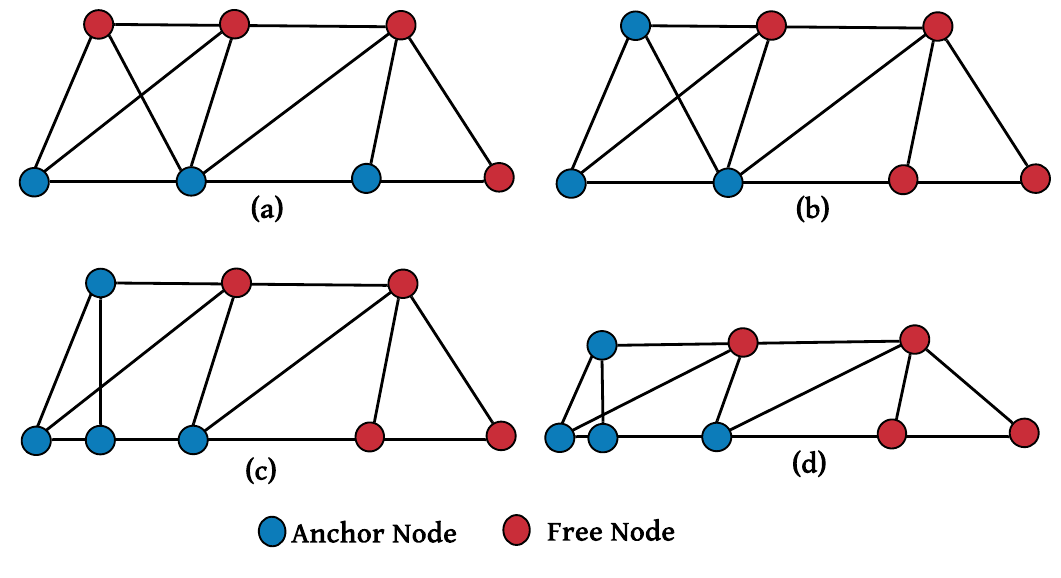}
\caption{Examples of coplanar networks. The blue dots are anchor nodes and the red dots are free nodes. The network in (a) is neither localizable nor infinitesimally angle-displacement rigid. The network in (b) is localizable and infinitesimally angle-displacement rigid.  The networks in (c) and (d) are localizable and have the same angle-displacement function, but they
are not infinitesimally angle-displacement rigid. }
\label{fig5}
\end{figure}

\subsection{Application to 2-D Network Localization}

\begin{assumption}\label{as31}
No two nodes are collocated. Each anchor node has at least two neighboring anchor nodes, and
each free node has at least three neighboring nodes. For the local-relative-bearing-based and angle-based networks, the free node and its neighbors are non-colinear. 
\end{assumption}

\begin{remark}
A node $\bar p_i=(x,y)^T$ in $\mathbb{R}^2$ can be represented by a node $p_i=(x,y,0)^T$ in $\mathbb{R}^3$.  Hence, 
2-D network can be regraded as a special coplanar network in $\mathbb{R}^3$. It is shown in Fig. \ref{moti}, if quadrilateral $\diamondsuit_{ijkh}(p)$ is obtained by rotating the quadrilateral $\diamondsuit_{ijkh}(q)$
about $Y$-axis by an angle $\pi$, 
quadrilaterals $\diamondsuit_{ijkh}(p)$ and $\diamondsuit_{ijkh}(q)$ will be on the same plane and have the same angle-displacement function $f_{\mathcal{T}_{\mathcal{G}}}(p)$. Note that this rotation can only be completed in $\mathbb{R}^3$ rather than in $\mathbb{R}^2$. Hence, the rigidity of the coplanar network should be studied in $\mathbb{R}^3$ rather than in $\mathbb{R}^2$.
\end{remark}

Next, we will show the relaxed displacement constraint in a coplanar network in $\mathbb{R}^3$.
In a coplanar network, for the free node $p_i$ and its neighbors $p_j,p_k,p_h$ in $\mathbb{R}^3$, we can know that there exists
a non-zero vector
$(\mu_{ij}, \mu_{ik}, \mu_{ih})^T \in \mathbb{R}^3$ (refer to Appendix-\textcolor{blue}{B}) such that
\begin{equation}\label{hy11}
    \mu_{ij}e_{ij}+\mu_{ik}e_{ik}+\mu_{ih}e_{ih}= \mathbf{0}.
\end{equation}

Using the methods presented in Section \ref{dissec} and \cite{fangtechnique}, 
the parameters $\mu_{ij}, \mu_{ik}, \mu_{ih}$ in \eqref{hy11} can also be obtained
by local relative position, distance, local relative bearing, angle, or ratio-of-distance measurements. Compared with the displacement constraint in \eqref{root},
the displacement constraint in \eqref{hy11} involves fewer nodes. Hence,
the corresponding assumption for constructing displacement constraint \eqref{hy11} can be relaxed as Assumption \ref{as31}.

\begin{lemma}\label{anchorn1}
If there are only three colinear anchor nodes $(n_a\!=\!3)$ in a coplanar network ($\mathcal{G}$, $p$), ($\mathcal{G}$, $p$) must not be localizable.
\end{lemma}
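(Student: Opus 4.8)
The plan is to apply the rigidity condition for localizability (Theorem \ref{the4}): the network fails to be localizable precisely when there exists a nonzero angle-displacement infinitesimal motion $\delta p = [\delta p_a^T, \delta p_f^T]^T \in \text{Null}(D)$ whose anchor block vanishes, $\delta p_a = \mathbf{0}$. Thus the entire task reduces to exhibiting one such motion, and the colinearity of the three anchors is exactly the hypothesis that makes this possible.

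First I would fix convenient coordinates. Since localizability is invariant under a global rigid transformation of the whole configuration (anchors included), I may place all nodes in the plane $z=0$ and align the line $\ell$ carrying the three colinear anchors with the $x$-axis, so that $p_i = (x_i, y_i, 0)^T$ with $y_i = 0$ for every anchor $i \in \mathcal{V}_a$. Then I would write down the candidate motion: the infinitesimal rotation of the entire network about the $x$-axis, i.e. $\delta p = (I_n \otimes A)p$ with the skew-symmetric generator $A$ chosen so that $A p_i = (0,0,y_i)^T$. By Lemma \ref{inva} the rotational space $S_r = \{(I_n\otimes A)p : A + A^T = \mathbf{0},\, A \in \mathbb{R}^{3\times 3}\}$ lies in the trivial motion space and hence in $\text{Null}(R(p)) = \text{Null}(D)$, so this $\delta p$ belongs to $\text{Null}(D)$ with no further computation.

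Two checks then remain. The anchor block vanishes, $\delta p_a = \mathbf{0}$, because every anchor satisfies $y_i = 0$ and is therefore pointwise fixed by the rotation about $\ell$. And $\delta p \neq \mathbf{0}$: since the network is coplanar but non-colinear while all three anchors already occupy $\ell$, at least one \emph{free} node $j$ must lie off $\ell$, giving $y_j \neq 0$ and $\delta p_j = (0,0,y_j)^T \neq \mathbf{0}$. Having produced a nonzero $\delta p \in \text{Null}(D)$ with $\delta p_a = \mathbf{0}$, Theorem \ref{the4} yields non-localizability, which is the claim.

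The conceptual point to get right — and the place where the hypothesis does its work — is why this argument requires the anchors to be colinear, in contrast to the four-coplanar-anchor case of Lemma \ref{anchorn}, whose out-of-plane ambiguity is discrete (a reflection across the anchor plane). An out-of-plane infinitesimal rotation fixes an anchor only when that anchor lies on the rotation axis; three colinear anchors share the single line $\ell$, so one rotation about $\ell$ pins all of them simultaneously, whereas non-colinear anchors would be displaced and no such surviving trivial motion exists. I expect the only delicate steps to be the clean justification of the coordinate normalization (a rotation about an arbitrary line is the composition of a rotation about a parallel line through the origin with a translation, both of which lie in $\text{Null}(D)$ by Lemma \ref{inva}, so the rotation fixing $\ell$ is legitimate) and the verification that a free node genuinely sits off $\ell$, which is immediate from the coplanar-but-non-colinear definition once all three anchors are consumed by $\ell$.
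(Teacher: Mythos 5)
Your proof is correct, but it takes a genuinely different route from the paper's. The paper's Appendix-C argument is a discrete-symmetry construction: it reflects every free node across the anchor line $L$ within the plane $S$, then verifies through a case analysis (displacement constraints involving three, two, one, or zero anchors) that each displacement constraint \eqref{hy11} is preserved under this reflection, concludes that $(\mathcal{G},p)$ and the reflected $(\mathcal{G},p')$ share the same angle-displacement function, and only then passes to Theorem \ref{the4}. You instead exhibit a single explicit infinitesimal motion --- the rotation field about $L$, pointing out of the plane --- whose membership in $\text{Null}(D)$ is immediate from Lemma \ref{inva} (with the composition remark correctly handling an axis not through the origin), whose anchor block vanishes because all three anchors sit on the axis, and which is nonzero because the coplanar-but-non-colinear hypothesis forces some free node off $L$. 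Your version is shorter and avoids the case analysis entirely, since it never needs to know how many anchors appear in any given displacement constraint; it also makes transparent exactly where colinearity is used (a single axis pins all anchors), which your contrast with Lemma \ref{anchorn} articulates well. What the paper's construction buys in exchange is a concrete second configuration satisfying the same constraints, i.e., a direct witness to non-uniqueness of the solution of the localization equations rather than just an infinitesimal flex; note that the two arguments are really two faces of the same out-of-plane symmetry, since the paper's reflected configuration is precisely the endpoint of rotating the free nodes about $L$ by angle $\pi$, and your $\delta p$ is the generator of that one-parameter family. Both proofs rest on Theorem \ref{the4} in the end, and your checks (anchors fixed, $\delta p \neq \mathbf{0}$, null-space membership) are each justified, so there is no gap.
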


The proof of \textit{Lemma \ref{anchorn1}} is give in the Appendix-\textcolor{blue}{C}. 
Based on \textit{Lemma \ref{anchorn1}}, it is easy to prove that there must be at least three non-colinear anchor nodes $(n_a\!\ge\!3)$
if a coplanar network ($\mathcal{G}$, $p$) is localizable.

\begin{theorem}\label{cor2}
(\textbf{Topological Condition for Localizability of Coplanar Network}) {Under Assumption \ref{as31},}
a coplanar network ($\mathcal{G}$, $p$)  with at least three non-colinear anchor nodes $n_a \! \ge \! 3$ in $\mathbb{R}^3$ is localizable if it is infinitesimally angle-displacement rigid.
\end{theorem}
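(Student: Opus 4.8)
The plan is to mirror the proof of Theorem~\ref{t3}, but to pay special attention to the out-of-plane rotational directions, which is exactly where the coplanar case differs from the non-coplanar one. Since $(\mathcal{G},p)$ is infinitesimally angle-displacement rigid, $\text{dim}(\text{Null}(R(p)))=7$; combining this with Lemma~\ref{inva} (whose computation of $\eta_d$ and $\eta_r$ carries over with the relaxed three-neighbor displacement constraint \eqref{hy11}, and whose seven trivial generators are independent because the network is non-colinear) gives $\text{Null}(R(p))=\text{Span}\{\mathbf{1}_n\otimes I_3,\, p,\, (I_n\otimes A)p: A+A^T=\mathbf{0},\, A\in\mathbb{R}^{3\times3}\}$. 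Because $D=R(p)^TR(p)$, we have $\text{Null}(D)=\text{Null}(R(p))$, so every $\delta p\in\text{Null}(D)$ can be written as $\delta p=(I_n\otimes M)p+\mathbf{1}_n\otimes\mathbf{t}$ with $M=sI_3+rA$, $s,r\in\mathbb{R}$, $A+A^T=\mathbf{0}$, and $\mathbf{t}\in\mathbb{R}^3$.

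Next, by Theorem~\ref{the4} (whose criterion is purely algebraic, depending only on the block partition of $D$, and hence valid verbatim under Assumption~\ref{as31}), it suffices to show that no nonzero $\delta p\in\text{Null}(D)$ satisfies $\delta p_a=\mathbf{0}$. So I would assume $\delta p_a=\mathbf{0}$, which means $Mp_i+\mathbf{t}=\mathbf{0}$ for every anchor $i$. The three non-colinear anchors affinely span the two-dimensional plane carrying the coplanar network, so the affine map $x\mapsto Mx+\mathbf{t}$ vanishes on that whole plane; equivalently, $M$ annihilates the two-dimensional direction space $W$ of the plane, and $\mathbf{t}$ is then fixed by any base point.

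The crux --- and the step I expect to be the real obstacle --- is to deduce $M=\mathbf{0}$ and $\mathbf{t}=\mathbf{0}$ from the fact that $M=sI_3+rA$ kills only the two-dimensional subspace $W$ rather than all of $\mathbb{R}^3$. In the non-coplanar Theorem~\ref{t3}, four non-coplanar anchors span $\mathbb{R}^3$ and force $M=\mathbf{0}$ immediately; here only two in-plane directions are constrained, so I must use the antisymmetry of $A$ to recover the out-of-plane component. Choosing coordinates so that $W=\text{Span}\{e_1,e_2\}$ and writing $A$ as the cross-product matrix of $\omega=(w_1,w_2,w_3)^T$, the conditions $Me_1=\mathbf{0}$ and $Me_2=\mathbf{0}$ read $(s,rw_3,-rw_2)^T=\mathbf{0}$ and $(-rw_3,s,rw_1)^T=\mathbf{0}$, which together yield $s=0$ and $rw_1=rw_2=rw_3=0$, hence $M=sI_3+rA=\mathbf{0}$ and then $\mathbf{t}=\mathbf{0}$. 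Thus $\delta p=\mathbf{0}$, contradicting that it was nonzero.

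Finally, I would conclude that every nonzero $\delta p\in\text{Null}(D)$ has $\delta p_a\neq\mathbf{0}$, so by Theorem~\ref{the4} the coplanar network $(\mathcal{G},p)$ is localizable; the complementary necessity of the non-colinearity hypothesis is already supplied by Lemma~\ref{anchorn1}. The essential point is that antisymmetry couples the in-plane rotation to the out-of-plane tilt, so pinning down the two in-plane directions is enough to annihilate the full infinitesimal rotation --- this is precisely why three non-colinear anchors suffice in the plane where four non-coplanar anchors are needed in space.
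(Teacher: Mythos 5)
Your proof is correct and takes essentially the same route as the paper, which disposes of this theorem in one line by stating that it ``follows from that of Theorem~\ref{t3}'': infinitesimal rigidity pins $\text{Null}(D)$ to the $7$-dimensional trivial similarity motions, and Theorem~\ref{the4} reduces localizability to showing that no nonzero trivial motion vanishes on the anchor set. Your explicit skew-symmetry computation --- that $M=sI_3+rA$ annihilating the plane's two-dimensional direction space forces $s=0$, $rA=\mathbf{0}$, hence $M=\mathbf{0}$ and $\mathbf{t}=\mathbf{0}$ --- is precisely the step the paper leaves implicit in adapting the four-non-coplanar-anchor argument to three non-colinear anchors, and you carry it out correctly.
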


\begin{figure}[t]
\centering
\includegraphics[width=0.7\linewidth]{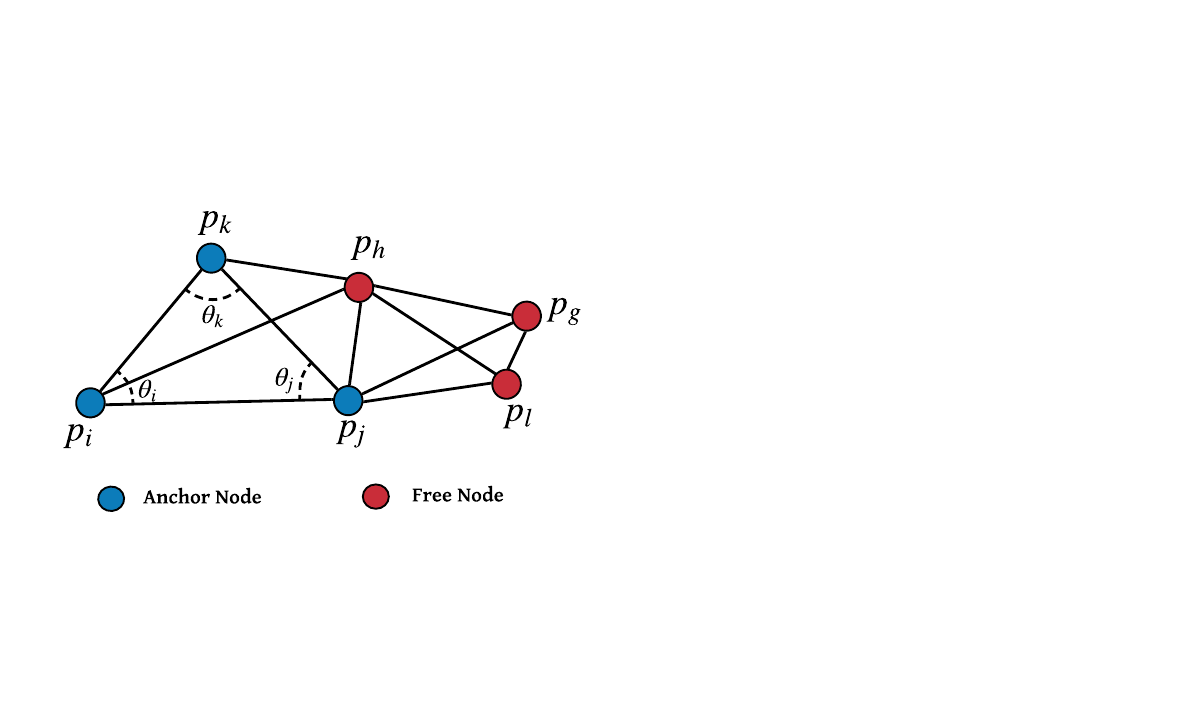}
\caption{Construction of 2-D localizable network. }
\label{motiaa}
\end{figure}

The proof of Theorem \ref{cor2} follows from that of \text{Theorem \ref{t3}}. 
Examples of coplanar networks are given in Fig. \ref{fig5}.
\text{Theorem \ref{cor2}} is applicable to 2-D network localization because  2-D network can be regraded as a coplanar network in $\mathbb{R}^3$. Since four nodes in 2-D or five nodes in 3-D can form a displacement constraint,
inspired by \cite{aspnes2006theory}, a simple sufficient graph condition is trilateration ordering for a 2-D localizable network or quadrilateral ordering for a 3-D localizable network.

Next,
an example of how to construct a 2-D localizable network is given in Fig. \ref{motiaa}, where the positions of anchor nodes are $p_i=(1,0)^T$, $p_j=(3,0)^T$, $p_k=(2,1)^T$.
\begin{enumerate}[(i)]
\item For the free node $p_h$,
the displacement constraint among $p_i,p_j,p_k,p_h$ is set as
$e_{ih}-3e_{jh}-2e_{kh}=0$, which can be rewritten as
\begin{equation}\label{lo1}
p_h=-\frac{1}{4}p_i+\frac{3}{4}p_j+\frac{1}{2}p_k.    
\end{equation}

\item For the free node $p_l$, the displacement constraint among $p_i,p_j,p_h,p_l$ is set as $5e_{il}-13e_{jl}-2e_{hl}=0$, which can be rewritten as
\begin{equation}
p_l=-\frac{1}{2}p_i+\frac{13}{10}p_j+\frac{1}{5}p_h.   
\end{equation}

\item For the free node $p_g$, the displacement constraint among $p_j,p_h,p_l,p_g$ is set as $e_{jg}-2e_{hg}-4e_{lg}=0$, which can be rewritten as
\begin{equation}\label{lo3}
p_g=-\frac{1}{5}p_j+\frac{2}{5}p_h+\frac{4}{5}p_l.     
\end{equation}

\end{enumerate}

Since the positions of anchor nodes $p_i,p_j,p_k$ are known, the positions of free nodes $p_h=(3,\frac{1}{2})^T,p_l=(4,\frac{1}{10})^T,p_g=(\frac{19}{5},\frac{7}{25})^T$ can be obtained by \eqref{lo1}-\eqref{lo3}.  Hence, the 2-D network in Fig. \ref{motiaa} is localizable.
Similarly, we can build a 3-D localizable network based on the known anchor nodes and displacement constraints.

\section{Distributed Network localization Protocol}\label{distri}

\subsection{Distributed Protocol}

Based on the cost function \eqref{cost1}, the  distributed localization algorithm is designed as
\begin{equation}\label{dis}
\dot  {\hat{p}}_f = -\bigtriangledown_{{\hat{p}}_f} ({\hat{p}}^T D {\hat{p}})= -D_{ff}{\hat{p}}_f - D_{fa}p_a.
\end{equation}

The algorithm \eqref{dis} is distributed because each free node $i \in  \mathcal{V}_f$ only uses
local information of $D$ 
that is related to it. 

\begin{remark}
The angle-displacement information matrix is obtained by $D=R(p)^TR(p)$, where
$R(p)$ can be obtained by the known anchor positions and local relative measurements (local relative position, distance, local relative bearing, angle, or ratio-of-distance measurements.)
\end{remark}

Based on \eqref{dis}, the
control protocol of node $i \in  \mathcal{V}_f$ is expressed as
\begin{equation}
\dot  {\hat{p}}_i  = \sum\limits_{(i,j,k,h,l) \in  \mathcal{X}_{\mathcal{G}}}M(i,j,k,h,l)+ \sum\limits_{(j,i,k,h,l) \in  \mathcal{X}_{\mathcal{G}}} \bar M(j,i,k,h,l), 
\end{equation}
where
\begin{equation}
\begin{array}{ll}
     & M(i,j,k,h,l) = (\mu_{ij}+\mu_{ik}+\mu_{ih}+\mu_{il}) \cdot  [\mu_{ij}(\hat{p}_j - \hat{p}_i)\!+\! \\
     &\mu_{ik}(\hat{p}_k - \hat{p}_i)\!+\!\mu_{ih}(\hat{p}_h - \hat{p}_i)\!+\!\mu_{il}(\hat{p}_l - \hat{p}_i)],
\end{array}
\end{equation}
and
\begin{equation}
    \begin{array}{ll}
     & \bar M(j,i,k,h,l) = \mu_{ji}^2(\hat{p}_j - \hat{p}_i)\!+\! \mu_{ji}\mu_{jk}(\hat{p}_j - \hat{p}_k)\!+\! \\
     & \mu_{ji}\mu_{jh}(\hat{p}_j - \hat{p}_h)\!+\!\mu_{ji}\mu_{jl}(\hat{p}_j - \hat{p}_l).
\end{array}
\end{equation}

\begin{theorem}\label{ts3}
(\textbf{Convergence of the Distributed Protocol}) {Under Assumption \ref{ass3},}
if a local-relative-measurement-based network ($\mathcal{G}$, $p$) with at least three non-colinear anchor nodes $n_a \! \ge \!  3$ is infinitesimally angle-displacement rigid in $\mathbb{R}^3$, given arbitrary initial estimates $\hat p_i(0), i \in \mathcal{V}_f$,  $\hat p_i(t)$ converges to $p_i$ globally and exponentially fast for all the free nodes $i \in \mathcal{V}_f$ under the distributed localization protocol \eqref{dis}.
\end{theorem}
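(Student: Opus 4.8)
The plan is to treat the protocol \eqref{dis} as a linear time-invariant system and reduce the claim to two facts: that $D_{ff}$ is symmetric positive definite, and that the true positions $p_f$ are the unique equilibrium. Once these are in hand, global exponential convergence follows from a quadratic Lyapunov argument, so essentially all of the work lies in pinning down the spectral properties of $D_{ff}$.

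First I would establish that $D_{ff}$ is positive definite, which is the crux of the proof. Since $D = R(p)^T R(p)$ we have $\text{Null}(D) = \text{Null}(R(p))$, and because the network is infinitesimally angle-displacement rigid, Lemma \ref{inva} identifies this null space with the $7$-dimensional trivial motion space $\text{Span}\{\mathbf{1}_n \otimes I_3,\, p,\, (I_n \otimes A)p : A + A^T = \mathbf{0}\}$ of translations, scalings, and rotations. I would then argue that with at least three non-colinear anchors no nonzero trivial motion can leave all anchors fixed: restricting a trivial motion to an anchor $i$ gives $(sI_3 + A)p_i + t$, and requiring this to vanish at three non-colinear anchors forces $(sI_3 + A)$ to annihilate two independent difference vectors. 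This is impossible for a nonzero matrix of the form $sI_3 + A$ with $A$ skew-symmetric, whose real kernel has dimension at most one unless the matrix is zero, so $s = 0$, $A = \mathbf{0}$, $t = \mathbf{0}$. Hence every nonzero $\delta p \in \text{Null}(D)$ has $\delta p_a \neq \mathbf{0}$, so Theorem \ref{the4} yields localizability and Theorem \ref{the3} gives that $D_{ff}$ is nonsingular; since $D \succeq 0$ forces $D_{ff} \succeq 0$ by Remark \ref{rr2}, nonsingularity upgrades this to $D_{ff} \succ 0$.

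Next I would pass to the error $\tilde p_f = \hat p_f - p_f$. Because the true configuration satisfies $p \in \text{Null}(D)$, the block identity $D_{ff} p_f + D_{fa} p_a = \mathbf{0}$ holds (as already used in Lemma \ref{ls4}); substituting $D_{fa} p_a = -D_{ff} p_f$ into \eqref{dis} collapses the affine dynamics to the homogeneous linear error system
\begin{equation}
\dot{\tilde p}_f = -D_{ff}\tilde p_f.
\end{equation}
Taking the Lyapunov function $V(\tilde p_f) = \tfrac{1}{2}\tilde p_f^T \tilde p_f$, its derivative along the trajectories is $\dot V = -\tilde p_f^T D_{ff}\tilde p_f \le -\lambda_{\min}(D_{ff})\|\tilde p_f\|^2 = -2\lambda_{\min}(D_{ff})V$, and $D_{ff} \succ 0$ gives $\lambda_{\min}(D_{ff}) > 0$. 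The comparison lemma then yields $\|\tilde p_f(t)\| \le e^{-\lambda_{\min}(D_{ff})t}\|\tilde p_f(0)\|$ for arbitrary initial estimates, i.e. $\hat p_i(t) \to p_i$ globally and exponentially fast for every free node.

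I expect the main obstacle to be the second step, specifically upgrading positive semi-definiteness of $D_{ff}$ to strict positive definiteness, since this is exactly where both the infinitesimal rigidity hypothesis and the non-colinear anchor condition are indispensable. Without strict definiteness $D_{ff}$ would be singular, the equilibrium set would be an affine subspace rather than the single point $p_f$, and the protocol would converge only to some configuration consistent with the displacement constraints instead of to the true locations. The remaining steps are routine once $D_{ff} \succ 0$ is secured, as the dynamics are linear and the cost \eqref{cost1} is the quadratic form whose gradient flow is being run.
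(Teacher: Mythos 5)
Your proposal is correct and takes essentially the same route as the paper: reduce everything to $D_{ff}$ being symmetric positive definite (infinitesimal rigidity plus the anchor condition, then Theorems \ref{the3}--\ref{the4} and Remark \ref{rr2}), use $D_{ff}p_f + D_{fa}p_a = \mathbf{0}$ to center the dynamics at the true positions, and run the quadratic Lyapunov argument on $V=\frac{1}{2}\|\hat p_f - p_f\|^2$. The only differences are that you spell out two steps the paper leaves compressed: you prove directly that no nonzero trivial motion $(sI_3+A)p_i+\mathbf{t}$ can vanish at three non-colinear anchors (the kernel of a nonzero $sI_3+A$ with $A$ skew has dimension at most one, so it cannot annihilate two independent anchor differences), whereas the paper simply cites Theorem \ref{t3} (stated for four non-coplanar anchors, with the coplanar case in Theorem \ref{cor2}); and you extract the explicit decay estimate $\dot V \le -2\lambda_{\min}(D_{ff})V$, whereas the paper's proof stops at $\dot V < 0$, which on its own establishes only asymptotic rather than exponential convergence, so your version actually closes that small gap.
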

\begin{proof}
From \text{Theorem \ref{the3}}, \text{Theorem} \ref{t3} and \textit{Remark \ref{rr2}},  we can know that the network ($\mathcal{G}$, $p$) is localizable and $D_{ff}$ 
is a positive definite matrix with $p_f = - D_{ff}^{-1}D_{fa}p_a$.
Consider a Lyapunov function $V = \frac{1}{2}\| {\hat{p}}_f - {{p}}_f \|^2$, we have 
\begin{equation}
\begin{array}{ll}
\dot V &= (\hat p_f - p_f)^T \dot  {\hat{p}}_f \\
& = (\hat p_f - p_f)^T(-D_{ff}{\hat{p}}_f - D_{fa}p_a) \\
& = (\hat p_f - p_f)^T(-D_{ff}{\hat{p}}_f + D_{ff}p_f) \\
& = -(\hat p_f - p_f)^TD_{ff}(\hat p_f - p_f) <0, \ \text{if} \ \hat p_f \neq p_f.
\end{array}
\end{equation}
Then, the conclusion follows.
\end{proof}

\begin{theorem}
(\textbf{Relationship between Distributed Protocol and Network Localizability})
The distributed localization protocol \eqref{dis} can globally localize the network ($\mathcal{G}$, $p$) if and only if the network is localizable.
\end{theorem}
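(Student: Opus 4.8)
The plan is to reduce the biconditional to a single spectral property of the symmetric positive semi-definite matrix $D_{ff}$ (recall \emph{Remark \ref{rr2}}), namely its nonsingularity, which by \emph{Theorem \ref{the3}} is equivalent to localizability. First I would record that $p_f$ is always an equilibrium of the protocol \eqref{dis}: since $p \in \text{Null}(D)$ we have $D_{ff}p_f + D_{fa}p_a = \mathbf{0}$ (as already used in the proof of \emph{Lemma \ref{ls4}}), so substituting $z = \hat p_f - p_f$ turns the affine dynamics \eqref{dis} into the homogeneous linear system $\dot z = -D_{ff} z$, with solution $z(t) = e^{-D_{ff}t}z(0)$. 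The entire statement then becomes: $z(t)\to \mathbf{0}$ for every initial estimate if and only if $D_{ff}$ is nonsingular.

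For sufficiency, suppose the network is localizable. By \emph{Theorem \ref{the3}}, $D_{ff}$ is nonsingular, hence by \emph{Remark \ref{rr2}} symmetric positive definite. I would then reuse the Lyapunov argument of \emph{Theorem \ref{ts3}} with $V = \tfrac12\|z\|^2$: the derivative $\dot V = -z^T D_{ff} z$ is strictly negative whenever $z \neq \mathbf{0}$, giving global exponential convergence $z(t)\to\mathbf{0}$, i.e. $\hat p_f(t)\to p_f$ from arbitrary initialization. Hence \eqref{dis} globally localizes the network.

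For necessity, I would argue by contraposition. If the network is not localizable, then by \emph{Theorem \ref{the3}} the matrix $D_{ff}$ is singular, so there is a nonzero $x \in \text{Null}(D_{ff})$. Initializing at $\hat p_f(0) = p_f + x$ gives $z(0) = x$ and, because $D_{ff}x=\mathbf{0}$, $z(t) = e^{-D_{ff}t}x = x$ for all $t$; thus $\hat p_f(t) = p_f + x \neq p_f$ for all time and the protocol fails to recover $p_f$ from this initial estimate. Therefore, if \eqref{dis} globally localizes the network then $D_{ff}$ must be nonsingular, i.e. the network is localizable.

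The only content beyond \emph{Theorem \ref{ts3}} lives in the necessity direction, and the step to watch is the claim that singularity of $D_{ff}$ genuinely obstructs convergence rather than merely slowing it. This is exactly where the positive semi-definiteness from \emph{Remark \ref{rr2}} is essential: $D_{ff}$ has no eigenvalue with nonzero imaginary part and no negative eigenvalue, so the semigroup $e^{-D_{ff}t}$ contracts on the range of $D_{ff}$ and acts as the identity on its kernel. Consequently there are no oscillatory or diverging modes to complicate the analysis, and the persistent mode along $\text{Null}(D_{ff})$ is precisely the non-decaying direction exhibited above, which makes the biconditional tight.
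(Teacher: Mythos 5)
Your proposal is correct, and it departs from the paper's proof in the necessity direction. Both proofs share the same skeleton: everything is routed through the equivalence of localizability with nonsingularity of $D_{ff}$ (Theorem \ref{the3}), and your sufficiency argument is the same Lyapunov computation as in Theorem \ref{ts3}, applied after noting via Remark \ref{rr2} that nonsingular plus positive semi-definite means positive definite. For necessity, however, the paper argues statically: if the protocol globally localizes the network, then $p_f$ is the unique solution of the stationarity equation $D_{ff}\hat p_f + D_{fa}p_a=\mathbf{0}$ by Lemma \ref{ls4}, whence $D_{ff}$ is nonsingular. You instead argue dynamically by contraposition: the change of variables $z=\hat p_f-p_f$ (legitimate, since $p\in\text{Null}(D)$ gives $D_{ff}p_f+D_{fa}p_a=\mathbf{0}$) turns \eqref{dis} into the homogeneous linear system $\dot z=-D_{ff}z$, and any nonzero $x\in\text{Null}(D_{ff})$ yields the stationary trajectory $\hat p_f(t)\equiv p_f+x$, an explicit initialization from which the protocol never recovers $p_f$. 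Your route buys two things: it makes explicit a step the paper leaves implicit, namely that every solution of the linear equation is an equilibrium of \eqref{dis}, so non-uniqueness genuinely defeats \emph{global} convergence rather than merely slowing it; and your closing spectral remark (semi-definiteness excludes complex and negative eigenvalues, so $e^{-D_{ff}t}$ contracts on the range of $D_{ff}$ and is the identity on its kernel) correctly certifies that the obstruction is a persistent bounded mode, not divergence. The paper's version is shorter because Lemma \ref{ls4} has already packaged the minimizer analysis; yours is more self-contained at the level of the ODE, and redoing the Lyapunov argument directly from positive definiteness of $D_{ff}$, as you do, is cleaner than citing Theorem \ref{ts3} wholesale, since that theorem's formal hypotheses (infinitesimal angle-displacement rigidity and at least three non-colinear anchors) are stronger than the plain nonsingularity available here.
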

\begin{proof}
(Sufficiency) If the network ($\mathcal{G}$, $p$) is localizable, 
$D_{ff}$ is nonsigular. From \text{Theorem \ref{ts3}}, we can know that the distributed localization protocol \eqref{dis} can globally localize the network. (Necessity) If the distributed localization protocol \eqref{dis} can globally localize the network, the true positions of free nodes $p_f$ is the unique solution to the cost function \eqref{pro}.
From \textit{Lemma \ref{ls4}}, we can know that $p_f$ is the unique solution to $D_{ff}\hat p_f +D_{fa}p_a= \mathbf{0}$. Hence, $D_{ff}$ must be nonsigular. From \text{Theorem \ref{the3}}, we can know that the network is localizable.
\end{proof}

\subsection{Position Estimation Error Under Noisy Measurements}

Considering the measurement noises in the position estimation, denote $\Delta  D_{ff}$ and $\Delta  D_{fa}$ as the error matrices due to the measurement noise and {write $\hat{D}_{ff} = D_{ff} \!+\! \Delta  D_{ff}$ and $\hat{D}_{fa} = D_{fa} \!+\! \Delta  D_{fa}$.}
The distributed algorithm \eqref{dis} becomes 
\begin{equation}
\dot  {\hat{p}}_f= -\hat{D}_{ff}{\hat{p}}_f - \hat{D}_{fa}p_a.
\end{equation}

If $\hat{D}_{ff}$ is nonsingular, the final estimate is given by
\begin{equation}
{\hat{p}}_f^* = -\hat{D}_{ff}^{-1}\hat{D}_{fa}p_a.
\end{equation}

\begin{theorem}
(\textbf{Nonsingular Condition of Matrix $D_{ff}$ with Noise}) Given a localizable network with $D_{ff}$ nonsingular, the matrix $\hat{D}_{ff}$ is nonsingular if the error matrix $\Delta  D_{ff}$ satisfies
\begin{equation}\label{perror}
\| \Delta  D_{ff} \| <  \lambda_{\min}(D_{ff}),
\end{equation}
where $\lambda_{\min}(D_{ff})$ is the smallest eigenvalue of $D_{ff}$. 
\begin{proof}
$D_{ff}$ is a symmetric positive definite matrix. 
Since $\| \Delta  D_{ff} \|  < \lambda_{\min}(D_{ff}) = \frac{1}{\| D_{ff}^{-1} \| }$, we have $\| D_{ff}^{-1} \Delta  D_{ff}   \| \le \| D_{ff}^{-1} \| \|  \Delta D_{ff} \| <1$. Hence,  the matrix $\hat{D}_{ff} = D_{ff} \!+\! \Delta  D_{ff} = D_{ff} (I \!+\! D_{ff}^{-1}\Delta  D_{ff})$ is nonsingular. 
	
\end{proof}
\end{theorem}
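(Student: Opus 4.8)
The plan is to treat $\hat{D}_{ff}$ as a small perturbation of $D_{ff}$ and invoke the standard Neumann-series argument for invertibility. First I would recall from \textit{Remark \ref{rr2}} that, because the network is localizable and $D_{ff}$ is nonsingular, $D_{ff}$ is a symmetric positive definite matrix; in particular all of its eigenvalues are strictly positive and so is its smallest eigenvalue $\lambda_{\min}(D_{ff})$. For a symmetric positive definite matrix the spectral norm of the inverse equals the reciprocal of the smallest eigenvalue, that is $\|D_{ff}^{-1}\| = 1/\lambda_{\min}(D_{ff})$. This identity is the bridge between the eigenvalue threshold appearing in \eqref{perror} and the operator-norm estimates used below, and it is precisely where the symmetry (hence normality) of $D_{ff}$ is needed.

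Next I would rewrite the perturbed matrix in factored form as $\hat{D}_{ff} = D_{ff} + \Delta D_{ff} = D_{ff}(I + D_{ff}^{-1}\Delta D_{ff})$, which is legitimate because $D_{ff}$ is invertible. Using submultiplicativity of the spectral norm together with the identity above, the hypothesis \eqref{perror} yields $\|D_{ff}^{-1}\Delta D_{ff}\| \le \|D_{ff}^{-1}\|\,\|\Delta D_{ff}\| = \|\Delta D_{ff}\|/\lambda_{\min}(D_{ff}) < 1$, so the correction factor is a contraction in operator norm.

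Finally I would argue that $I + D_{ff}^{-1}\Delta D_{ff}$ is nonsingular. Setting $M := D_{ff}^{-1}\Delta D_{ff}$ with $\|M\| < 1$, every eigenvalue of $M$ has modulus at most $\|M\| < 1$, so $-1$ is not an eigenvalue of $M$ and $\det(I+M) \neq 0$; equivalently the Neumann series $\sum_{k \ge 0}(-M)^k$ converges and supplies an explicit inverse of $I+M$. Since a product of two nonsingular matrices is nonsingular, $\hat{D}_{ff} = D_{ff}(I+M)$ is nonsingular, which is the claim. There is no serious obstacle here: the argument is a textbook perturbation bound, and the only point requiring care is the passage from the eigenvalue quantity $\lambda_{\min}(D_{ff})$ to the operator-norm quantity $\|D_{ff}^{-1}\|^{-1}$, which rests on $D_{ff}$ being symmetric positive definite as established in \textit{Remark \ref{rr2}}.
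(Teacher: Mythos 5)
Your proposal is correct and follows essentially the same route as the paper's proof: both invoke the positive definiteness of $D_{ff}$ (Remark \ref{rr2}) to identify $\|D_{ff}^{-1}\| = 1/\lambda_{\min}(D_{ff})$, factor $\hat{D}_{ff} = D_{ff}(I + D_{ff}^{-1}\Delta D_{ff})$, and use submultiplicativity to get $\|D_{ff}^{-1}\Delta D_{ff}\| < 1$. Your only addition is to spell out, via the Neumann series and the eigenvalue-modulus bound, why $\|M\| < 1$ makes $I + M$ nonsingular --- a step the paper leaves implicit.
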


\begin{figure}[t]
\centering
\includegraphics[width=0.8\linewidth]{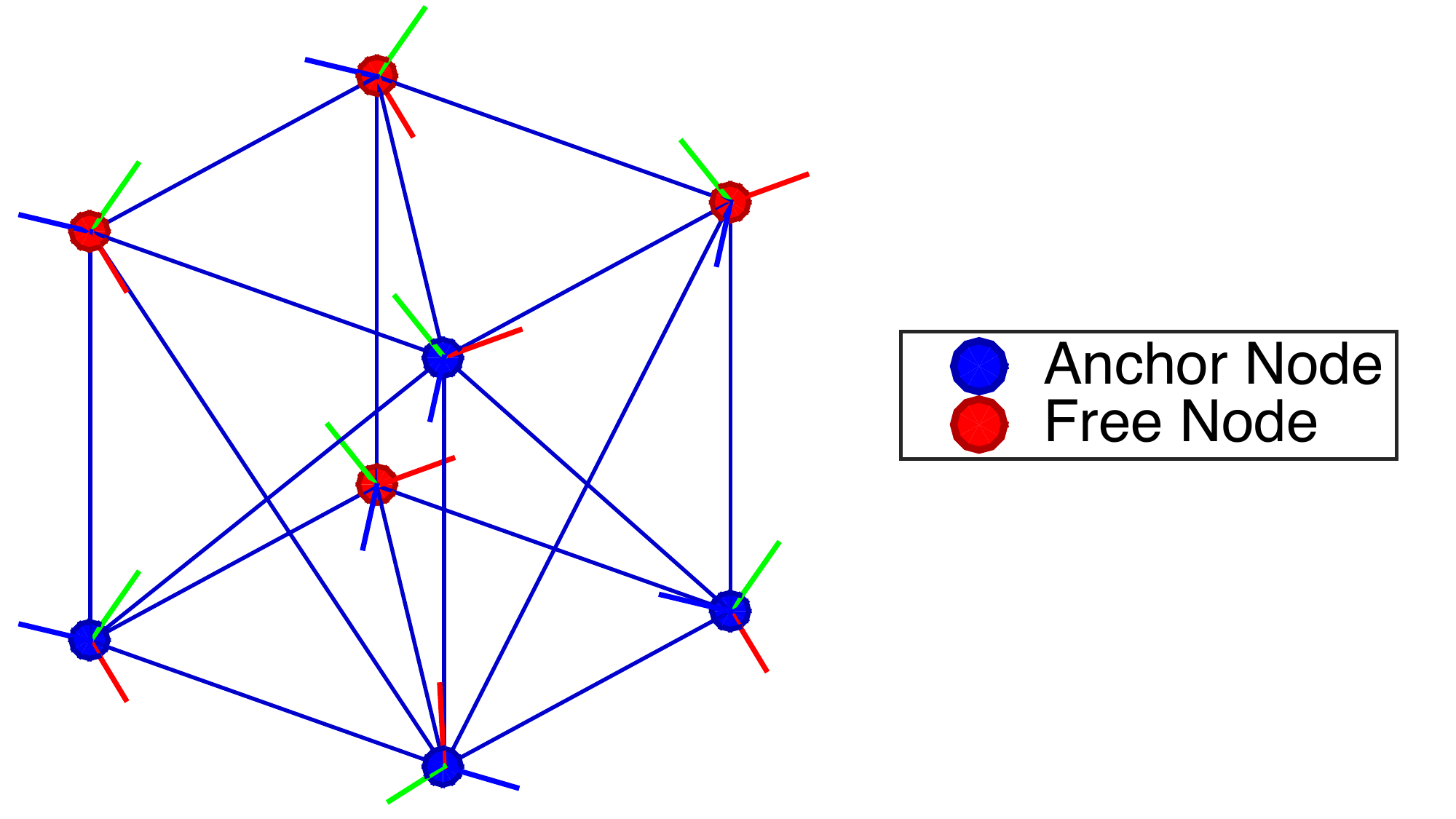}
\caption{{The 3-D network to be localized.} }
\label{case9}
\end{figure}

Since $p_f = -{D}_{ff}^{-1}{D}_{fa}p_a$ and $({D}_{ff}\!+\! \Delta {D}_{ff})^{-1}= {D}_{ff}^{-1}- {D}_{ff}^{-1}\Delta {D}_{ff}(I+{D}_{ff}^{-1}\Delta {D}_{ff})^{-1}{D}_{ff}^{-1}$ \cite{henderson1981deriving}, the position estimation error is 
\begin{equation}
\begin{array}{lll}
& \| {\hat{p}}_f^* - {{p}}_f\| \\
&= \| -({D}_{ff}\!+\! \Delta {D}_{ff})^{-1}({D}_{fa}\!+\! \Delta {D}_{fa})p_a\!+\! {D}_{ff}^{-1}{D}_{fa}p_a \|  \\
&= \| -(I \!+\! D_{ff}^{-1}\Delta D_{ff} )^{-1}D_{ff}^{-1} \Delta D_{fa} p_a  + \\
& \ \ \ \ D_{ff}^{-1}\Delta D_{ff}(I \!+\! D_{ff}^{-1}\Delta D_{ff} )^{-1}p_f  \| \\
& \le \| (I \!+\! D_{ff}^{-1}\Delta D_{ff} )^{-1}D_{ff}^{-1} \Delta D_{fa} p_a \| + \\
& \ \ \ \ \| D_{ff}^{-1}\Delta D_{ff}(I \!+\! D_{ff}^{-1}\Delta D_{ff} )^{-1}p_f\|  \\
& \le \frac{\|  \Delta D_{fa} \| \|p_a\| +\|  \Delta D_{ff} \| \|p_f\|}{\|I \!+\! D_{ff}^{-1}\Delta D_{ff}  \| \| D_{ff} \|}.
\end{array}
\end{equation}

\begin{figure*}[htbp]
\centering
\begin{subfigure}[t]{0.47\textwidth}
\centering
\includegraphics[width=1\linewidth]{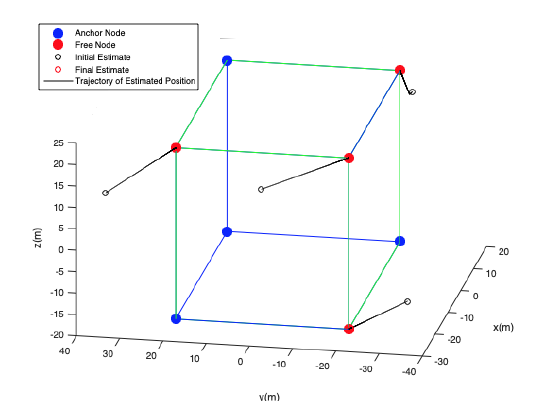}
\caption{Trajectories of estimated positions without measurement noise.}
\end{subfigure}
\begin{subfigure}[t]{0.47\textwidth}
\centering
\includegraphics[width=1\linewidth]{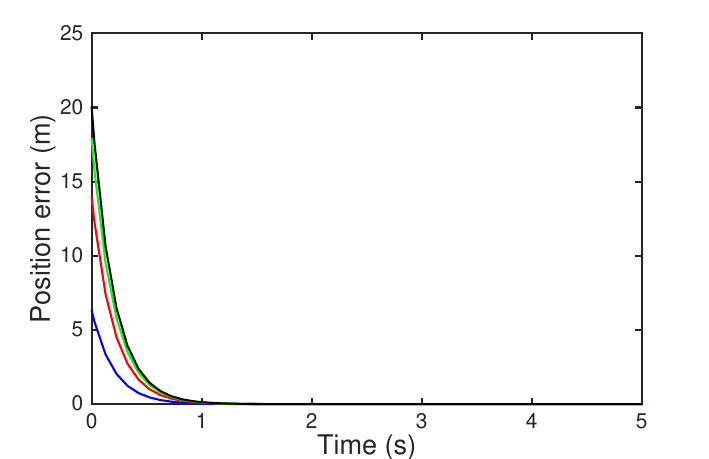}
\caption{Position errors without measurement noise.}
\end{subfigure}
\caption{Network localization without measurement noise.}
\label{case8}
\end{figure*}

\begin{figure*}[!t]
		\centering
		\begin{subfigure}[t]{0.45\textwidth}
		\centering
		\includegraphics[width=1\linewidth]{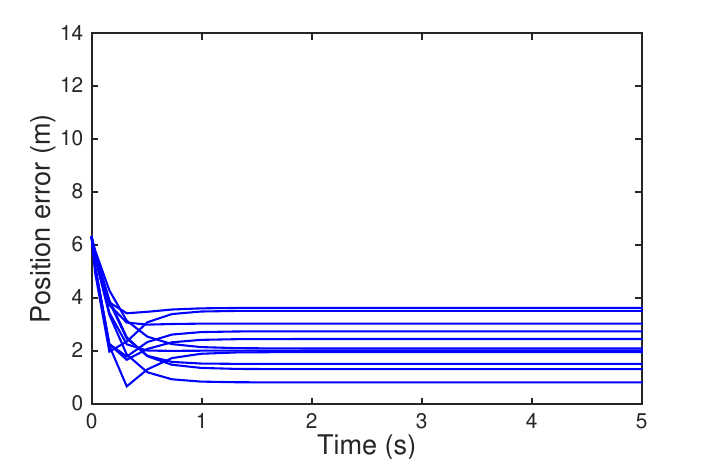}
		\caption{Position error of the free node $p_5$.}
		\label{e1}
	\end{subfigure}
	\begin{subfigure}[t]{0.45\textwidth}
		\centering
		\includegraphics[width=1\linewidth]{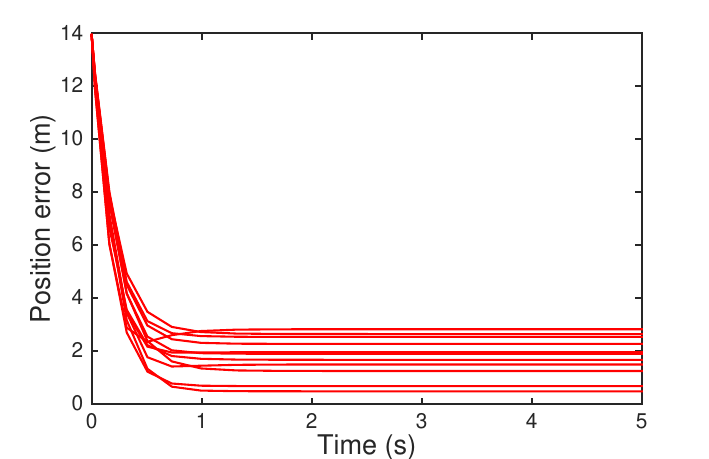}
		\caption{Position error of the free node $p_6$.}
		\label{e2}
	\end{subfigure}
	\begin{subfigure}[t]{0.45\textwidth}
	\centering
	\includegraphics[width=1\linewidth]{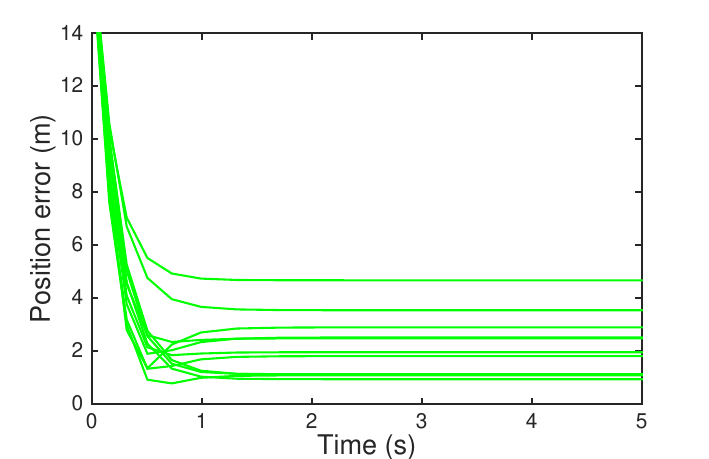}
     \caption{Position error of the free node $p_7$.}
	\label{e3}
	\end{subfigure}
    \begin{subfigure}[t]{0.45\textwidth}
	\centering
	\includegraphics[width=1\linewidth]{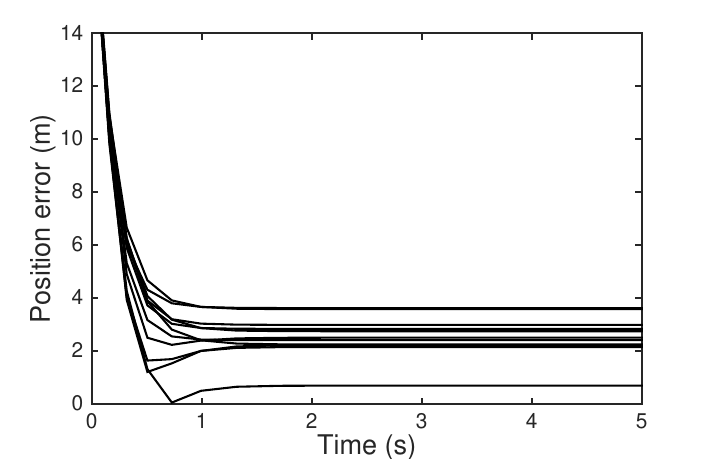}
     \caption{Position error of the free node $p_8$.}
	\label{e4}
	\end{subfigure}
 	 \caption{Network localization with measurement noise.}\label{e}
\end{figure*}

\subsection{Comparison with the Existing Distributed Network Localization Methods}

\begin{defn}\label{gener}
A configuration $p=(p_1^T, \cdots, p_{n}^T)^T$ is said to be generic if the coordinates $p_1, \cdots, p_n$ do not satisfy any nontrivial algebraic equation with integer coefficients. {Intuitively speaking, a generic configuration has no degeneracy, e.g., any three nodes are not on the same line in $\mathbb{R}^2$ and any four nodes are not on the same plane in $\mathbb{R}^3$ \cite{diao2014barycentric,han2017barycentric,lin2015distributed}.}
\end{defn}

The comparison with the existing distributed network localization methods is shown in Table \ref{tab:accuracy-orb}. The existing methods in relative-position-based,  distance-based, and bearing-based network localization require the generic configuration (refer to \text{Definition \ref{gener}}) or the known global coordinate frame.  

The generic configuration is a quite strict configuration, i.e., the parameters $\mu_{ij}, \mu_{ik}, \mu_{ih}, \mu_{il}$ in \eqref{root} and \eqref{hy11} in a generic configuration must be irrational numbers such as $\sqrt{2}$.  
Compared with the existing 2-D local-relative-position-based \cite{lin2015distributed},  2-D and 3-D distance-based  \cite{diao2014barycentric, han2017barycentric}  which require the generic configuration, our proposed method can be applied in both non-generic and generic configuration, i.e., the $\mu_{ij}, \mu_{ik}, \mu_{ih}, \mu_{il}$ in \eqref{root} and \eqref{hy11} can be either rational number or irrational number. 
In addition, compared with the existing 3-D global-relative-position-based \cite{barooah2007estimation} and 3-D global-relative-bearing-based network localization \cite{zhao2016localizability} which require the global coordinate frame, our proposed method can be applied in a network with unknown global coordinate frame. 

To our knowledge,
there exists no such known results for the localizability and distributed protocols of  angle-based networks in $\mathbb{R}^3$ and  ratio-of-distance-based networks in $\mathbb{R}^2$ and $\mathbb{R}^3$. Our proposed method solves angle-based distributed network localization in $\mathbb{R}^3$ and ratio-of-distance-based distributed network localization in $\mathbb{R}^2$ and $\mathbb{R}^3$.

\section{Simulation}\label{simulation}

A non-generic configuration with an unknown global coordinate frame in $\mathbb{R}^3$ is shown in Fig. \ref{case9}.
This network consists of four anchor nodes and four free nodes.
The positions of anchor nodes are $p_1\!=\!(20,-20,-20)^T$, $p_2\!=\!(20,20,-20)^T$, $p_3\!=\!(-20,20,-20)^T$ and $p_4\!=\!(20,20,20)^T$, respectively. The positions of free nodes are $p_5\!=\!(-20,-20,-20)^T$, $p_6\!=\!(20,-20,20)^T$, $p_7\!=\!(-20,20,20)^T$ and $p_8\!=\!(-20,-20,20)^T$, respectively.
{Since there are four coplanar nodes $p_4,p_6,p_7,p_8$ and their coordinates satisfies $p_4-p_6-p_7+p_8=0$, 
according to \text{Definition \ref{gener}},
the configuration $p=(p_1^T,\cdots,p_8^T)^T$ is non-generic. The red-green-blue lines represent the unknown local frame of each free node shown in Fig. \ref{case9}. 
The initial estimate of the position of each free node is randomly generated. The angle and displacement constraints are obtained by the known anchor positions and local relative measurements (local relative position, distance, local relative bearing, angle, or ratio-of-distance measurements).}
When there is no measurement noise,
it is shown in Fig. \ref{case8} that the position error of each free node converges to $0$. 
To simulate the noisy environment, 
each element of the error matrices $\Delta D_{ff}$ and $\Delta D_{fa}$ is generated 
by using a
white noise. 
Different measurement noises will result in different final estimates of the free nodes. We conduct $10$ simulations. The position errors of the free nodes are shown in Fig. \ref{e}(a)-(d).  The problem of how to evaluate the error matrices $\Delta D_{ff}, \Delta D_{fa}$ given the noise statistics of various types of local relative  measurements is beyond the scope of this paper, which will be considered in our future work.

\section{Conclusion}\label{coc}

This paper introduced a novel angle-displacement rigidity theory to investigate whether
a set of displacement constraints and angle constraints can uniquely characterize a network up to directly similar transformations, i.e., translation, rotation, and scaling. Based on the proposed
angle-displacement rigidity theory, a unified local-relative-measurement-based distributed network localization method is proposed, which can be used in local-relative-position-based, distance-based, local-relative-bearing-based,  angle-based, and ratio-of-distance-based distributed network localization. We provide necessary and sufficient conditions for network localizability. In addition, a distributed network localization protocol is proposed to globally estimate a network if the network is infinitesimally angle-displacement rigid.

The future work will be on network localization and formation control with mixed measurements based on the proposed angle-displacement rigidity theory.

\section*{Appendix}\label{app1}

\begin{figure}[t]
\centering
\includegraphics[width=0.9\linewidth]{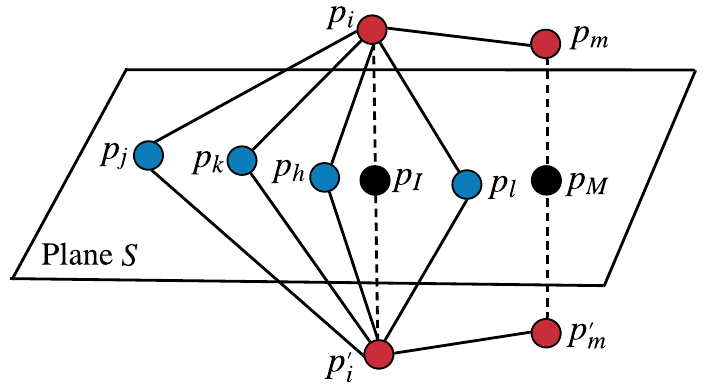}
\caption{An illustration of \textit{Lemma \ref{anchorn}}. }
\label{moti2}
\end{figure}

\subsection{Proof of \textit{Lemma} \ref{anchorn}}

For a non-coplanar network ($\mathcal{G}$, $p$), if there are only four anchor nodes $p_j, p_k, p_h, p_l$ which are on a plane $S$ in $\mathbb{R}^3$ shown in Fig. \ref{moti2}, $p_i, p_m$ are free nodes that are not on the plane $S$. $p'_i, p'_m$ are the symmetry nodes of $p_i, p_m$ with respect to the plane $S$.  
$p_I, p_M$ are the projections of $p_i, p_m$
on the plane $S$.
The displacement constraint \eqref{root}, based on different number of anchor nodes and free nodes, can be divided into five cases: 
(\text{\romannumeral1}) Four anchor nodes and one free node; (\text{\romannumeral2}) Three anchor nodes and two free nodes; (\text{\romannumeral3}) Two anchor nodes and three free nodes; (\text{\romannumeral4}) One anchor node and four free nodes; (\text{\romannumeral5}) Five free nodes. The above five cases are analyzed below.

Case (\text{\romannumeral1}):
Without loss of generality,
the displacement constraint $\mu_{ij}e_{ij}+\mu_{ik}e_{ik}+\mu_{ih}e_{ih} + \mu_{il}e_{il}= \mathbf{0}$, involving one free node $p_i$ and four anchor nodes $p_j, p_k, p_h, p_l$, can be rewritten as
\begin{equation}\label{pl1}
 \mu_{ij}(e_{iI}\!+\!e_{Ij})\!+\!\mu_{ik}(e_{iI}\!+\!e_{Ik})\!+\!\mu_{ih}(e_{iI}\!+\!e_{Ih})\! + \! \mu_{il}(e_{iI}\!+\!e_{Il})\!= \!\mathbf{0}. 
\end{equation}

From \eqref{pl1}, we have
\begin{equation}
 (\mu_{ij}\!+\!\mu_{ik}\!+\!\mu_{ih}\!+\!\mu_{il})e_{iI} \!+\!\mu_{ij}e_{Ij}\!+\!\mu_{ik}e_{Ik}\!+\!\mu_{ih}e_{Ih}\! + \! \mu_{il}e_{Il}\!= \!\mathbf{0}. 
\end{equation}

Since $e_{iI}$ is perpendicular to $e_{Ij}, e_{Ik}, e_{Ih}, e_{Il}$, we have $\mu_{ij}+\mu_{ik}+\mu_{ih}+\mu_{il}=0$. Then, the following equation must hold
\begin{equation}\label{pl}
 -(\mu_{ij}\!+\!\mu_{ik}\!+\!\mu_{ih}\!+\!\mu_{il})e_{iI} \!+\!\mu_{ij}e_{Ij}\!+\!\mu_{ik}e_{Ik}\!+\!\mu_{ih}e_{Ih}\! + \! \mu_{il}e_{Il}\!= \!\mathbf{0}. 
\end{equation}

Denote by $e'_{ij}=p_j-p'_i=e_{Ij}-e_{iI}, e'_{ik}=p_k-p'_i=e_{Ik}-e_{iI}, e'_{ih}=p_h-p'_i=e_{Ih}-e_{iI}, e'_{il}=p_l-p'_i=e_{Il}-e_{iI}$. From \eqref{pl}, we obtain
\begin{equation}\label{pl2}
    \mu_{ij}e'_{ij}+\mu_{ik}e'_{ik}+\mu_{ih}e'_{ih} + \mu_{il}e'_{il}= \mathbf{0}.
\end{equation}

Hence, $p_i, p_j, p_k, p_h, p_l$ and
$p'_i, p_j, p_k, p_h, p_l$ have the same displacement constraint. 

Case (\text{\romannumeral2}): Without loss of generality,
the displacement constraint $\bar \mu_{ij}e_{ij}+\bar \mu_{ik}e_{ik}+\bar \mu_{ih}e_{ih} + \bar \mu_{im}e_{im}= \mathbf{0}$,
involving two free node $p_i, p_m$ and three anchor nodes $p_j, p_k, p_h$, can be rewritten as 
\begin{equation}\label{pl11}
\begin{array}{ll}
     &  \bar \mu_{ij}(e_{iI}\!+\!e_{Ij})\!+\!\bar \mu_{ik}(e_{iI}\!+\!e_{Ik})\!+\! \bar \mu_{ih}(e_{iI}\!+\!e_{Ih})\! + \! \\
     & \bar \mu_{im}(e_{iI}\!+\!e_{IM}\!+\!e_{Mm})\!= \!\mathbf{0}.
\end{array}
\end{equation}

Since $e_{Mm}$ is parallel to $e_{iI}$, we have $e_{Mm}= \lambda e_{iI}, \lambda \in \mathbb{R}$.
\eqref{pl11} becomes
\begin{equation}
\begin{array}{ll}
& (\bar \mu_{ij}\!+\!\bar \mu_{ik}\!+\!\bar \mu_{ih}\!+\! (1+\lambda) \bar \mu_{im})e_{iI} \!+\!\bar \mu_{ij}e_{Ij}\!+\!\bar \mu_{ik}e_{Ik}\!+\!\\
& \bar \mu_{ih}e_{Ih}\! + \! \bar \mu_{im}e_{IM}\!= \!\mathbf{0}. 
\end{array}
\end{equation}
 
Since $e_{iI}$ is perpendicular to $e_{Ij}, e_{Ik}, e_{Ih}, e_{IM}$, we have $\bar \mu_{ij}+\bar \mu_{ik}+\bar \mu_{ih}+(1+\lambda)\bar \mu_{im}=0$. Then, the following equation must hold
\begin{equation}\label{plle}
\begin{array}{ll}
& -(\bar \mu_{ij}\!+\!\bar \mu_{ik}\!+\!\bar \mu_{ih}\!+\! (1+\lambda) \bar \mu_{im})e_{iI} \!+\!\bar \mu_{ij}e_{Ij}\!+\!\bar \mu_{ik}e_{Ik}\!+\!\\
& \bar \mu_{ih}e_{Ih}\! + \! \bar \mu_{im}e_{IM}\!= \!\mathbf{0}. 
\end{array}
\end{equation}

Denote by $e'_{ij}=p_j-p'_i=e_{Ij}-e_{iI}, e'_{ik}=p_k-p'_i=e_{Ik}-e_{iI}, e'_{ih}=p_h-p'_i=e_{Ih}-e_{iI}, e'_{im}=p'_m-p'_i=-e_{iI}+e_{IM}-e_{Mm}$. From \eqref{plle}, we obtain
\begin{equation}\label{pll2}
    \bar \mu_{ij}e'_{ij}+\bar \mu_{ik}e'_{ik}+\bar \mu_{ih}e'_{ih} + \bar \mu_{im}e'_{im}= \mathbf{0}.
\end{equation}

Hence, $p_i, p_j, p_k, p_h, p_m$ and
$p'_i, p_j, p_k, p_h, p'_m$ have the same displacement constraint. 
Similarly, for the cases (\text{\romannumeral3)-(\romannumeral5}), we can also prove that a displacement constraint \eqref{root} will remain unchanged by replacing the free nodes with the symmetry nodes of the free nodes with respect to the plane $S$.  

Denote $p_a$ and $p_f$ as the anchor nodes set and free nodes set, respectively. Denote $p'_f$ as the
symmetry nodes of the free nodes $p_f$ with respect to the plane $S$. Denote by $(\mathcal{G}, p)$ with $p=(p_a^T,p_f^T)^T$.  Denote by $(\mathcal{G}, p')$ with $p'=(p_a^T,{p'_f}^T)^T$. Since $(\mathcal{G}, p)$ and $(\mathcal{G}, p')$ have the same anchor nodes, i.e., the angle constraints among the anchor nodes are the same. From the cases (\text{\romannumeral1)-(\romannumeral5}) analyzed above, $(\mathcal{G}, p)$ and $(\mathcal{G}, p')$ have the same displacement constraints. Then, we can know that  $(\mathcal{G}, p)$ and $(\mathcal{G}, p')$ 
have the same angle-displacement function. Hence, 
there is an angle-displacement infinitesimal motion moving from $(\mathcal{G}, p)$ to $(\mathcal{G}, p')$ that involves no anchor node. From \text{Theorem \ref{the4}},  we can know that 
($\mathcal{G}$, $p$) is not localizable. Similarly, 
we can prove that a non-coplanar network ($\mathcal{G}$, $r$) is not localizable if there are less than four anchor nodes.

\subsection{Displacement Constraint in a Plane}

\begin{defn}\label{df1}
A plane in 3-D space is described with a single linear equation of the following form
\begin{equation}\label{hy}
  \varphi^Tb= c, 
\end{equation}
where $b \in \mathbb{R}^3$ is a non-zero vector and $c \in \mathbb{R}$ is a constant. $ \varphi \in \mathbb{R}^3$ is any point on the plane \eqref{hy}.
\end{defn}

If $p_i, p_j, p_k, p_h$ are on a plane $\varphi^Tb = c$ in 3-D space, we have
\begin{equation}\label{pq1}
\begin{array}{cccc}
&  p_{i}^T b = c, \ \ p_{j}^T b = c, \ \ p_{k}^T b = c, \ \ p_{h}^T b = c. 
\end{array}
\end{equation}

Equation \eqref{pq1} can be rewritten as
\begin{equation}
\begin{array}{c}
E_i^T b = \mathbf{0},
\end{array}
\end{equation}
where $E_i = (e_{ij}, e_{ik}, e_{ih}) \in \mathbb{R}^{3 \times 3}$. From \text{Definition \ref{df1}}, we can know that
$b$ is a non-zero vector. Hence, the matrix $E_i^T$ is not full rank. Since $\text{Rank}(E_i)= \text{Rank}(E_i^T)$, the matrix $E_i$ is also not full rank. From the matrix theory, there exists a non-zero vector $\mu_i=(\mu_{ij}, \mu_{ik}, \mu_{ih})^T \in \mathbb{R}^3$ such that
\begin{equation}\label{pq2}
 E_i \mu_i = \mu_{ij}e_{ij}+
 \mu_{ik}e_{ik}+\mu_{ih}e_{ih}.
\end{equation}

\subsection{Proof of \textit{Lemma} \ref{anchorn1}}

For a coplanar network ($\mathcal{G}$, $p$) in $\mathbb{R}^3$,  if there are only three anchor nodes $p_j, p_k, p_h$ that are on a line $L$ shown in Fig. \ref{moti3}, $p_i, p_m$ are free nodes that are not on the line $L$. 
$p_I, p_M$ are the projections of $p_i, p_m$
on the line $L$, and $p'_i, p'_m$ are the symmetry nodes of 
$p_i, p_m$ with respect to the line $L$ on the plane $S$. 
The displacement constraint \eqref{hy11}, based on different number of anchor nodes and free nodes, can be divided into four cases: 
(\text{\romannumeral1}) Three anchor nodes and one free node; (\text{\romannumeral2}) Two anchor nodes and two free nodes; (\text{\romannumeral3}) One anchor node and three free nodes; (\text{\romannumeral4}) Four free nodes. The above four cases are analyzed below.

Case (\text{\romannumeral1}): Without loss of generality, the 
displacement constraint $\mu_{ij}e_{ij}+\mu_{ik}e_{ik}+\mu_{ih}e_{ih} = \mathbf{0}$, involving one free node $p_i$ and three anchor nodes $p_j, p_k, p_h$, can be rewritten as
\begin{equation}\label{pll1}
 \mu_{ij}(e_{iI}\!+\!e_{Ij})\!+\!\mu_{ik}(e_{iI}\!+\!e_{Ik})\!+\!\mu_{ih}(e_{iI}\!+\!e_{Ih})\!= \!\mathbf{0}. 
\end{equation}

From \eqref{pll1}, we have
\begin{equation}
 (\mu_{ij}\!+\!\mu_{ik}\!+\!\mu_{ih})e_{iI} \!+\!\mu_{ij}e_{Ij}\!+\!\mu_{ik}e_{Ik}\!+\!\mu_{ih}e_{Ih}\!= \!\mathbf{0}. 
\end{equation}

Since $e_{iI}$ is perpendicular to $e_{Ij}, e_{Ik}, e_{Ih}$, we have $\mu_{ij}+\mu_{ik}+\mu_{ih}=0$. Then, the following equation must hold
\begin{equation}\label{pll}
 -(\mu_{ij}\!+\!\mu_{ik}\!+\!\mu_{ih})e_{iI} \!+\!\mu_{ij}e_{Ij}\!+\!\mu_{ik}e_{Ik}\!+\!\mu_{ih}e_{Ih}\!= \!\mathbf{0}. 
\end{equation}

\begin{figure}[t]
\centering
\includegraphics[width=0.8\linewidth]{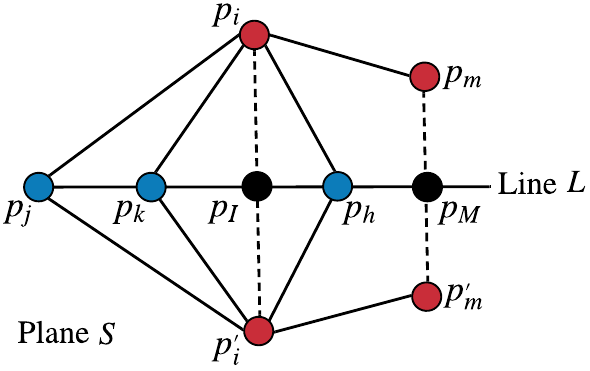}
\caption{An illustration of \textit{Lemma \ref{anchorn1}}. }
\label{moti3}
\end{figure}

Denote by  $e'_{ij}=p_j-p'_i=e_{Ij}-e_{iI}, e'_{ik}=p_k-p'_i=e_{Ik}-e_{iI}, e'_{ih}=p_h-p'_i=e_{Ih}-e_{iI}$. 
From \eqref{pll}, we obtain
\begin{equation}
    \mu_{ij}e'_{ij}+\mu_{ik}e'_{ik}+\mu_{ih}e'_{ih} = \mathbf{0}.
\end{equation}

Hence, $p_i, p_j, p_k, p_h$ and
$p'_i, p_j, p_k, p_h$ have the same displacement constraint.

Case (\text{\romannumeral2}): Without loss of generality,
the displacement constraint $\bar \mu_{ij}e_{ij}+\bar \mu_{ik}e_{ik} + \bar \mu_{im}e_{im}= \mathbf{0}$,
involving two free node $p_i, p_m$ and two anchor nodes $p_j, p_k$, can be rewritten as 
\begin{equation}\label{ppl11}
      \bar \mu_{ij}(e_{iI}\!+\!e_{Ij})\!+\!\bar \mu_{ik}(e_{iI}\!+\!e_{Ik})\!+\!  \bar \mu_{im}(e_{iI}\!+\!e_{IM}\!+\!e_{Mm})\!= \!\mathbf{0}.
\end{equation}

Since $e_{Mm}$ is parallel to $e_{iI}$, we have $e_{Mm}= \lambda e_{iI}, \lambda \in \mathbb{R}$.
\eqref{ppl11} becomes
\begin{equation}
(\bar \mu_{ij}\!+\!\bar \mu_{ik}\!+\! (1\!+\!\lambda) \bar \mu_{im})e_{iI} \!+\!\bar \mu_{ij}e_{Ij}\!+\!\bar \mu_{ik}e_{Ik}\!+\! \bar \mu_{im}e_{IM}\!= \!\mathbf{0}. 
\end{equation}

Since $e_{iI}$ is perpendicular to $e_{Ij}, e_{Ik}, e_{IM}$, we have $\bar \mu_{ij}+\bar \mu_{ik}+(1+\lambda)\bar \mu_{im}=0$. Then, the following equation must hold
\begin{equation}\label{pplle}
-(\bar \mu_{ij}\!+\!\bar \mu_{ik}\!+\! (1\!+\!\lambda) \bar \mu_{im})e_{iI} \!+\!\bar \mu_{ij}e_{Ij}\!+\!\bar \mu_{ik}e_{Ik}\!+\! \bar \mu_{im}e_{IM}\!= \!\mathbf{0}. 
\end{equation}

Denote by $e'_{ij}=p_j-p'_i=e_{Ij}-e_{iI}, e'_{ik}=p_k-p'_i=e_{Ik}-e_{iI}, e'_{im}=p'_m-p'_i=e_{IM}+e_{Mm}+ve_{iI}=-e_{iI}+e_{IM}-e_{Mm}$. 
From \eqref{pll}, we obtain
\begin{equation}
    \mu_{ij}e'_{ij}+\mu_{ik}e'_{ik}+\mu_{im}e'_{im} = \mathbf{0}.
\end{equation}

Hence, $p_i, p_j, p_k, p_m$ and
$p'_i, p_j, p_k,  p'_m$ have the same displacement constraint. 
Similarly, for the cases (\text{\romannumeral3)-(\romannumeral4}), we can also prove that a displacement constraint \eqref{hy11} will remain unchanged by replacing the free nodes with the symmetry nodes of the free nodes with respect to the line $L$ on the plane $S$.  

Denote $p_a$ and $p_f$ as the anchor nodes set and free nodes set, respectively. Denote  $p'_f$ as the
symmetry nodes of the free nodes $p_f$ with respect to the line $L$ on the plane $S$. Denote by $(\mathcal{G}, p)$ with $p=(p_a^T,p_f^T)^T$.  Denote by $(\mathcal{G}, p')$ with $p'=(p_a^T,{p'_f}^T)^T$. Since $(\mathcal{G}, p)$ and $(\mathcal{G}, p')$ have the same anchor nodes, i.e., the angle constraints among the anchor nodes are the same. From the cases (\text{\romannumeral1)-(\romannumeral4}) analyzed above, $(\mathcal{G}, p)$ and $(\mathcal{G}, p')$ have the same displacement constraints. Then, we can know that  $(\mathcal{G}, p)$ and $(\mathcal{G}, p')$ 
have the same angle-displacement function. Hence, 
there is an angle-displacement infinitesimal motion moving from $(\mathcal{G}, p)$ to $(\mathcal{G}, p')$ that involves no anchor node. From \text{Theorem \ref{the4}}, we can know that 
($\mathcal{G}$, $p$) is not localizable. Similarly, 
we can prove that a coplanar network ($\mathcal{G}$, $r$) is not localizable if there are less than three anchor nodes.

\bibliographystyle{IEEEtran}
\bibliography{papers}

\end{document}